\documentclass[acmtog, nonacm]{}

\usepackage{booktabs} %

\usepackage{geometrycollective}

\renewcommand{\T}[1]{T^{#1}}
\newcommand{\V}[1]{V^{#1}}
\newcommand{\E}[1]{E^{#1}}
\newcommand{\F}[1]{F^{#1}}
\renewcommand{\H}[1]{H^{#1}}
\newcommand{\Tl}[1]{\ell^{#1}}

\newcommand{\Vinserted}{{\V{}}^\star}

\newcommand{\Tpos}[2]{q^{#1}_{#2}}

\newcommand{\halfedge}[1]{\smash{\overset{\raisebox{3pt}{$\rightharpoonup$}}{\smash{#1}}}}
\newcommand{\he}[1]{\smash{\overset{\raisebox{3pt}{\tiny$\,\rightharpoonup$}}{\smash{#1}}}}

\algnewcommand{\LeftComment}[1]{\textcolor{commentblue}{\(\triangleright\)\textit{#1}}}

\newcommand{\corner}[2]{\smash{\text{\raisebox{-2pt}{$\overset{#2}{\rule{0pt}{2pt}\smash{\!_{#1}}}$}}}}

\newcommand{\angletriplet}[3]{\angle_{{#1}{#2}{#3}}}

\newcounter{algo}
\newenvironment{algo}[1]
{ \refstepcounter{algo}\noindent\rule{\columnwidth}{1.25pt}\vspace{-.2\baselineskip} \\ \textbf{Algorithm~\thealgo} #1\vspace{-.55\baselineskip} \\ \noindent\rule{\columnwidth}{.5pt}\vspace{-1.2\baselineskip} }
{ \vspace{-.8\baselineskip}\noindent\rule{\columnwidth}{.5pt}\vspace{-\baselineskip} }

\acmJournal{TOG}
\acmVolume{99}
\acmNumber{99}
\acmArticle{99}

\setcopyright{acmcopyright}

\copyrightyear{2021}
\acmYear{2021}
\acmMonth{99}
\acmDOI{}

\begin{document}
\title{Integer Coordinates for Intrinsic Geometry Processing}
\author{Mark Gillespie}
\author{Nicholas Sharp}
\author{Keenan Crane}
\affiliation{%
  \institution{Carnegie Mellon University}
  \streetaddress{5000 Forbes Ave}
  \city{Pittsburgh}
  \state{PA}
  \postcode{15213}}

\renewcommand\shortauthors{Gillespie, Sharp, and Crane}

\begin{abstract}

  \vspace*{1em}

  In this work, we present a general, efficient, and provably robust representation for intrinsic triangulations.
  These triangulations have emerged as a powerful tool for robust geometry processing of surface meshes, taking a low-quality mesh and retriangulating it with high-quality intrinsic triangles.
  However, existing representations either support only edge flips, or do not offer a robust procedure to recover the common subdivision, that is, how the intrinsic triangulation sits along the original surface.
   To build a general-purpose robust structure, we extend the framework of \emph{normal coordinates}, which have been deeply studied in topology, as well as the more recent idea of \emph{roundabouts} from geometry processing, to support a variety of mesh processing operations like vertex insertions, edge splits, \etc{}
   The basic idea is to store an integer per mesh edge counting the number of times a curve crosses that edge.
  We show that this paradigm offers a highly effective representation for intrinsic triangulations with strong robustness guarantees.
  The resulting data structure is general and efficient, while offering a guarantee of always encoding a valid subdivision.
  Among other things, this allows us to generate a high-quality intrinsic Delaunay refinement of all manifold meshes in the challenging Thingi10k dataset for the first time.
  This enables a broad class of existing surface geometry algorithms to be applied out-of-the-box to low-quality triangulations.

  \vspace*{1.5em}
\end{abstract}

\begin{CCSXML}
  <ccs2012>
  <concept>
  <concept_id>10002950.10003714.10003715.10003749</concept_id>
  <concept_desc>Mathematics of computing~Mesh generation</concept_desc>
  <concept_significance>500</concept_significance>
  </concept>
  </ccs2012>
\end{CCSXML}

\ccsdesc[500]{Mathematics of computing~Mesh generation}

\keywords{remeshing, intrinsic triangulation, Delaunay triangulation, discrete differential geometry}

\maketitle

\begin{figure}
  \vspace*{3em}
  \centering
  \includegraphics{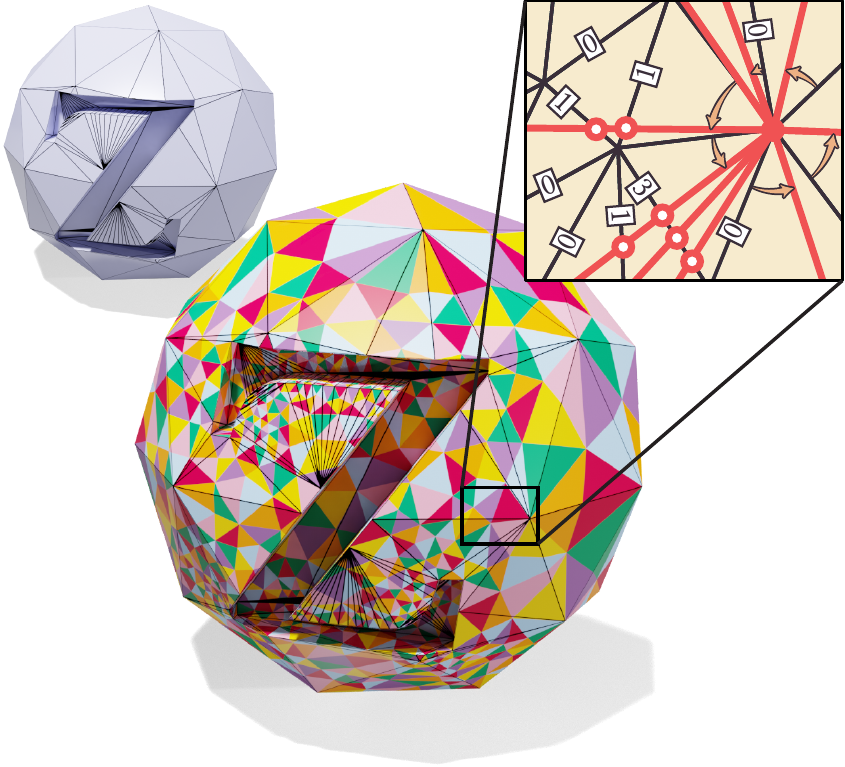}
   \vspace*{0.5em}
   \caption{
   We extend the machinery of normal coordinates beyond just edge flips, to enable a broader set of local mesh operations such as vertex insertion.  By doing so, we dramatically improve the robustness of algorithms like Delaunay refinement in the intrinsic setting.\label{fig:teaser}}
\end{figure}

\vspace*{1.5em}

\section{Introduction and Related Work}
\label{sec:Introduction}

Geometric data plays a growing role in applications from computational fabrication to to autonomous driving to augmented reality, but data in these applications is increasingly difficult to deal with due to the poor quality of meshes generated by non-expert users, or by algorithms targeted at visualization rather than mesh processing---there have hence been significant recent efforts to make geometric algorithms more robust~\cite{Zhou:2016:MAS,Hu:2018:TMW,Sellan:2019:SGP,Sawhney:2020:MCG}.  One basic tool is to remesh the input to obtain higher-quality elements, but for this approach to work on difficult, near-degerate inputs, remeshing algorithms must themselves be extremely robust.  Moreover, traditional approaches to remeshing based on vertex positions in \(\mathbb{R}^n\) must negotiate the trade-off between mesh size, the quality of mesh elements, and geometric approximation of the input domain.

\paragraph{Intrinsic Triangulations.} A promising idea is to approach geometry processing from the \emph{intrinsic} point of view: rather than considering the embedding of the geometry in space, one focuses only on point-to-point distances along the surface, as encoded by the edge lengths of a triangulation.  This perspective is quite natural for problems in geometry processing and scientific computing, since many objects in these domains are themselves intrinsic---for instance, the Laplace-Beltrami operator, which appears in numerous algorithms and fundamental partial differential equations (PDEs).  In this paper we consider so-called \emph{intrinsic triangulations}, whose edges no longer need to be straight line segements in Euclidean space, but can instead be any straight or \emph{geodesic} path across the input polyhedron (see \figref{IntrinsicTriangulationExample}).  This construction still captures the input geometry exactly, but provides a dramatically larger space of possible triangulations, lending enormous flexibility to geometric algorithms.  For instance, it de-couples the quality of elements used for simulation from the elements used to describe the geometry, side-stepping the trade-off encountered in traditional meshing.  Moreover, this perpective enables the input polyhedral surface to serve as a background domain---analogous to the Euclidean plane in traditional computational geometry---allowing one to ``port'' trusted algorithms from the plane to curved surfaces.  Most importantly, by encapsulating all this machinery in an interface that resembles an ordinary mesh, one can provide robustness as a subroutine: rather than make existing algorithms more robust one by one, we can transform the input into an intrinsic triangulation, execute an ordinary (``non-robust'') algorithm, and then read off the reults in a variety of ways.

\begin{figure}
  \includegraphics{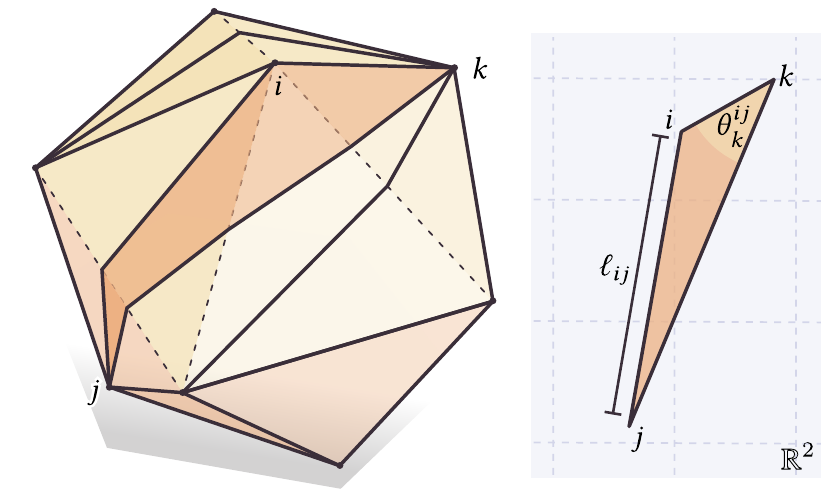}
  \caption{Edges of an intrinsic triangulation are allowed to be \emph{geodesic paths} along a surface (\figloc{left}). The faces of such a triangulation can be laid out in the plane as ordinary triangles (\figloc{right}). \label{fig:IntrinsicTriangulationExample}}
\end{figure}

A remaining impediment to making the intrinsic approach truly reliable is to develop \emph{data structures} for intrinsic triangulations that provide all the expected operations from standard mesh processing, while simultaneously providing strong guarantees of correctness.  The basic challenge is encoding the correpsondence between the input mesh \(\T0\) and an intrinsic triangulation \(\T1\) sitting atop it, so that data on one triangulation can be transferred to the other.  The first such data structure was the \emph{overlay mesh} of \citet{Fisher:2006:IDT}. The overlay uses a halfedge mesh decorated with special vertex and edge attributes to maintain the \emph{common subdivision} of \(\T0\) and \(\T1\), \ie, the polygon mesh obtained by ``slicing up'' the underlying surface along the edges of both \(\T0\) and \(\T1\).  This approach guarantees correct connectivity, but ordinarily-local operations such as edge flips become non-local and expensive to evaluate---moreover, edge flips are the only operation supported by this data structure.  \citet{Sharp:2019:NIT} instead encode the correspondence implicitly by storing so-called \emph{signposts} at vertices, which give the direction and length of each intrinsic edge.  This approach is somewhat complementary to the overlay mesh: local mesh operations are now cheap to evaluate, but the encoding of connectivity now depends on floating-point values, and is hence not guaranteed to be correct.  For instance, when tracing out intrinsic edges small floating point errors can cause one to ``miss'' the target vertex.  A key development here, however, was extending intrinsic triangulations to operations beyond edge flips.

\begin{figure}
  \includegraphics{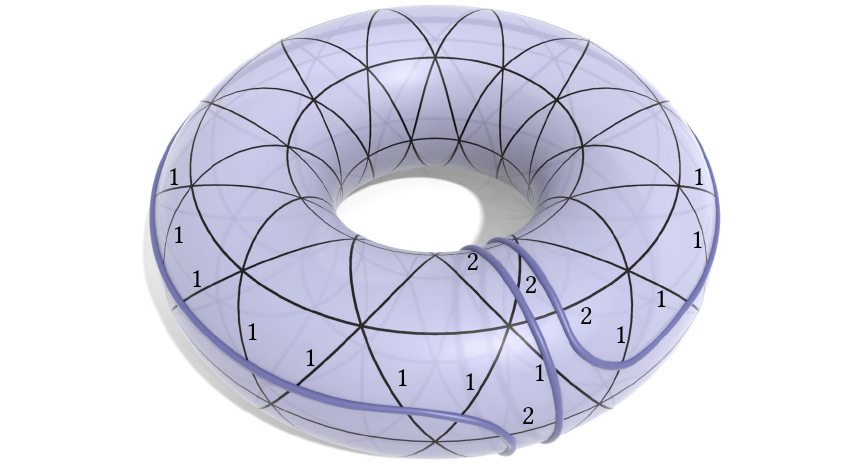}
  \caption{We build on the idea of \emph{normal coordinates} count how many times a curve crosses each edge of a triangulation. \label{fig:NormalCoordinatesTorus}}
\end{figure}

\paragraph{Integer-Based Encoding.} In this paper, we introduce an integer-based data structure for intrinsic triangulations that offers the best of both worlds: an implicit encoding of correspondence that supports fast local operations, but which is also guaranteed to correctly describe connectivity.  Like the signposts, our data structure also supports a wide variety of local mesh operations (\secref{AlgorithmsAndDataStructures}).  As demonstrated in \secref{Evaluation}, we get dramatically improved robustness for difficult tasks, \eg{}, we achieve a 100\% success rate for extracting a high-quality \emph{Delaunay refinement} of low-quality input data.  In turn, any algorithm that relies on a high-quality triangulation (\eg{}, for solving PDEs) can immediately benefit from this improved robustness.  For instance, in \secref{Applications} we observe improved robustness for computing geodesic distance, local parameterization, constructing geodesics, and finding smooth vector fields.

The basic starting point for our data structure is the concept of \emph{normal coordinates} from geometric topology (not to be confused with \emph{geodesic normal coordinates} from Riemannian geometry).  However, we must augment this construction in several ways in order to make it suitable for geometry processing.  As detailed in \secref{IntegerCoordinates}, the basic idea of normal coordinates is to simply count how many times each edge of a triangulation is crossed by some curve (\figref{NormalCoordinatesTorus}).  Such coordinates were originally developed to study not curves, but rather embeddings of surfaces in 3-manifolds~\cite{Kneser:1929:GFDM,Haken:1961:TNIK,Hass:2020:AIG}, and subsequently appear in several places in mathematics (\eg{}, for studying the \emph{mapping class group}~\cite{Farb:2011:APM}), including significant work on algorithms~\cite{Bell:2015:RMC,Bell:2018:F,Schaefer:2008:CDT}.  In theoretical computer science, normal coordinates are also viewed as a means of ``compressing'' curves, \eg{}, the total number of bits required to store a long winding curve can be exponentially smaller than storing explicit segments along the curve~\cite{Erickson:2013:TCC}.

One challenge with using normal coordinates for geometry processing is that existing literature rarely considers operations beyond edge flips: the little that does considers only closed loops \eg{}~\cite[Section 5.4]{Schaefer:2002:ANC}, whereas curves that terminate at vertices are absolutely essential for encoding triangulations.  A second issue is that normal coordinates alone are not enough to uniquely identify curves implied by the coordinates with logical edges of a mesh.  Very recently, \citet[Section 5.2]{Gillespie:2021:DCE} proposed a solution to this issue using what they call \emph{roundabouts}, but again do not consider operations beyond edge flips.  Third, whereas most literature assumes that normal coordinates encode homotopy classes of curves in a purely topological setting (or perhaps hyperbolic geodesics), we must make a significant departure this perspective and assume that the normal coordinates encode a triangulation of a Euclidean polyhedron by geodesic edges.  This distinction is important since, in general, not all normal coordinates describe a valid Euclidean geodesic triangulation.  Considering this special case in turn enables us to establish procedures not previously seen.

\paragraph{Contributions.} Overall, we make the following contributions:
\begin{itemize}
  \item We describe normal coordinates as a representation for general intrinsic triangulations, including the case where vertices have been added to the triangulation.
  \item We extend integer-based data structures for geometric intrinsic triangulations to include local operations beyond edge flips.
  \item We prove the correctness and quality of a Delaunay refinement algorithm for intrinsic triangulations of surfaces without boundary.
  \item We also extend intrinsic Delaunay refinement to surfaces with boundary.
  \item We introduce a new, more accurate way of transferring functions between bases on different triangulations.
\end{itemize}

We also experimentally validate the robustness of our technique, including generating intrinsic Delaunay refinements for all manifold meshes in the Thingi10k dataset, and demonstrate robustness for a variety of basic algorithms from geometry processing.

\section{Notation and Conventions}
\label{sec:NotationAndConventions}

\subsection{Connectivity}
\label{sec:Connectivity}
\setlength{\columnsep}{.5em}
\setlength{\intextsep}{.5em}
\begin{wrapfigure}[8]{r}{43pt}
  \vspace{-.75\baselineskip}\includegraphics{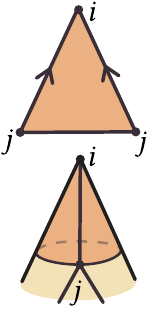}
\end{wrapfigure}
Throughout we assume that our domain is an oriented manifold surface \(M\), possibly with boundary.  We write \(T = (V,E,F)\) to denote a triangulation of \(M\) with vertices \(V\), edges \(E\), and faces \(F\).  In general we allow triangulations that are not be simplicial, but can instead be a \(\Delta\)-complex in the sense of \citet[Section 2.1]{Hatcher:2002:AT}---we allow, \eg{}, two edges of the same triangle to be glued together (see inset).  We will refer to vertices \(i \in V\), edges \(ij \in E\), and faces \(ijk \in F\) by one, two, or three indices, \resp{}.
Note that as our triangulations need not be simplicial, the vertices \(i,j\) of an edge may not be distinct, and do not necessarily identify the edge---there may be multiple edges between \(i\) and \(j\).

\setlength{\columnsep}{.5em}
\setlength{\intextsep}{.5em}
\begin{wrapfigure}[4]{r}{52pt}
  \vspace{-.75\baselineskip}\includegraphics{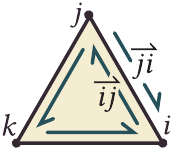}
\end{wrapfigure}
\noindent We will refer to oriented \emph{halfedges} \(\halfedge{ij} \in H\), and will use \(\smash{u_i^{jk}}\) to denote a value \(u\) at corner \(i\) of triangle \(ijk\), and \(u_{\he{ij}}\) to denote a value \(u\) at halfedge \(\halfedge{ij}\).
Throughout, we consider a fixed input triangulation \(\T0\) of \(M\), as well as a dynamic \emph{intrinsic triangulation} \(\T1\) sitting atop \(M\).
\(\T1\) must contain all vertices of \(\T0\), \ie{} \(\V1 \supseteq \V0\), but may include additional inserted vertices which we denote by \(\Vinserted := \V1 \setminus \V0\).
We will generally use indices \(a,b,c\) for vertices of \(\T0\) and indices \(i,j,k\) for vertices of \(\T1\) (which may also be in \(\T0\)).

\subsection{Geometry}
\label{sec:Geometry}
The geometry of a triangulation is determined by a collection of edge lengths \(\ell: E \to \mathbb{R}_{>0}\) satisfying the triangle inequalities in each face.
For instance, we typically begin by assigning \(\E0\) and \(\E1\) the same edge lengths \(\smash{\Tl{}_{ij} = |f_j - f_i|}\) determined by vertex coordinates \(f: \V{} \to \mathbb{R}^3\). While the lengths \(\Tl0\) remain fixed, the edge lengths \(\Tl1\) of \(\T1\) may change due to operations like intrinsic edge flips (\secref{EdgeFlip}).
Even after changing the edge lengths, each individual triangle \(ijk \in \F1\) can always be drawn as an ordinary triangle in the Euclidean plane, allowing us to compute quantities such as face areas or corner angles.
In general, we will not need to simultaneously embed all triangles of $\T1$ in \(\mathbb{R}^3\). Finally, we use \(\exp_x: T_x M \to M\) \hfill
\setlength{\columnsep}{.5em}
\setlength{\intextsep}{.5em}
\begin{wrapfigure}{r}{75pt}
  \vspace{-.25\baselineskip}\includegraphics{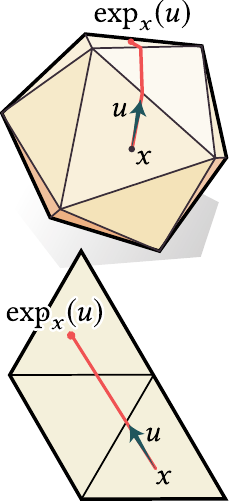}
\end{wrapfigure}
to denote the exponential map at \(x\).  Given a tangent vector \(u\) at \(x\), \(exp_x(u)\) is the point reached by walking straight along the surface in the direction of \(u\) for a distance \(|u|\) (see inset).  (In practice this can be implemented as in \cite[Section 3.2.2]{Sharp:2019:NIT}).

Points \(x \in M\) can be expressed in \emph{barycentric coordinates} relative to some simplex (vertex, edge, or triangle) of a triangulation \(\T{}\), \eg, a point in a triangle \(ijk\) is given by three coordinates \(u_i, u_j, u_k \in [0,1]\) such that \(u_i + u_j + u_k = 1\).
For vertices, we set the single coordinate \(u_i\) to 1.
Importantly, we will need to encode points \(x\) with respect to \emph{two different} triangulations \(\T0\) and \(\T1\), using barycentric coordinates \(u\) and \(v\), \resp{} (\figref{BarycentricCoordinates}).  Note that for vertices that are shared by both triangulations, we have \(u_i = v_i = 1\).
We will use \(\Tpos0{}\) to denote a location on \(\T0\) represented by simplex along with a barycentric coordinate, and similarly will use \(\Tpos1{}\) for a location on \(\T1\).

\begin{figure}
  \includegraphics[width=\columnwidth]{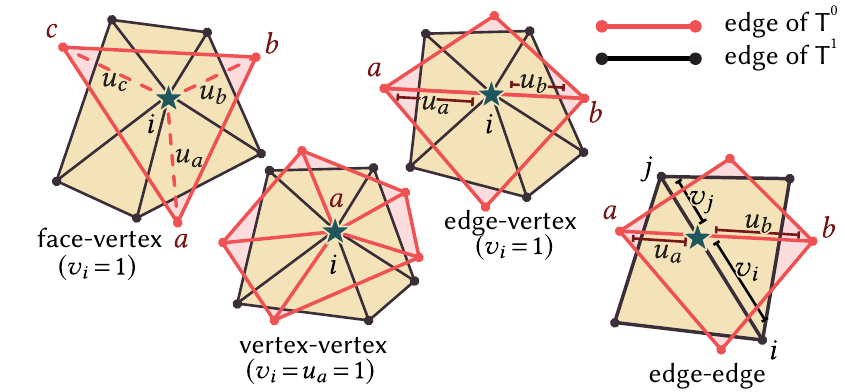}
  \caption{Points are encoded relative to both triangulations \(\T0\) and \(\T1\).  For each triangulation we store the simplex containing the point, and the barycentric coordinates within that simplex.  Here we show a few examples.\label{fig:BarycentricCoordinates}}
\end{figure}

\subsection{Integer Coordinates}
\label{sec:IntegerCoordinates}

\paragraph{Normal coordinates.}
We use normal coordinates to count the number times each edge of \(\T1\) crosses the edges of \(\T0\) (see \figref{IntrinsicViewpoint}, \figloc{left}).
In principle, normal coordinates could either be defined as a value per edge of $\T0$, counting the number of crossings from $\T1$, or as a value per edge of $\T1$.
In our setting, we take the latter approach, as it remains fully-informative even when we insert new vertices in to $\T1$.
It is then natural to think of \(\T1\) as the primary triangulation, with \(\T0\) as a collection of geodesic curves sitting along it (\figref{IntrinsicViewpoint}, \figloc{right}).
To emphasize this abstract viewpoint, we will refer to the edges of $\E0$ as \emph{curves}.
Because we allow edges to be split, a single edge $ab \in \E0$ may actually correspond to a sequence of curves expressed in normal coordinates which meet at intermediate vertices $i \in \Vinserted$.

\begin{figure}
  \includegraphics{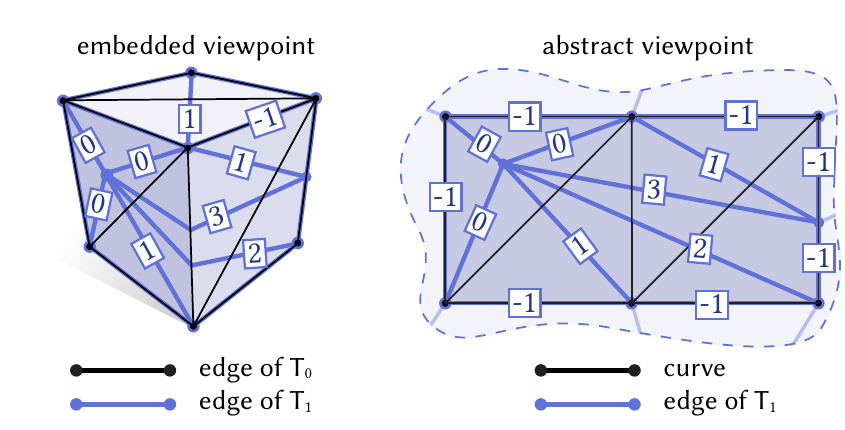}
  \caption{
    The intrinsic triangulation \(\T1\) is often presented embedded in \(\mathbb{R}^3\) on top of the input triangulation \(\T0\) (\figloc{left}). However, in our setting it is helpful to think of \(\T1\) as an abstract intrinsic triangulation, given only by connectivity and edge lengths, which carries a collection of geodesic curves (\figloc{right}).
    \label{fig:IntrinsicViewpoint}
  }
\end{figure}

Precisely, we use \(n: \E1 \to \mathbb{Z}\) to denote the normal coordinates.  For each edge \(ij \in \E1\), the quantity \(n_{ij}\) indicates how many times \(ij\) is crossed by edges of \(\E0\): if \(n_{ij} > 0\), then this is a count of crossings (formally, transversal intersections), whereas if \(n_{ij} = -1\), it indicates that a curve runs along edge $ij$ (\figref{IntrinsicViewpoint}, \figloc{left}).
Note that $n_{ij}$ is never less than $-1$ because we are working with triangulations, and multiple edges of a triangulation cannot lie along the exact same path.
We use $\smash{n^{+}_{ij}}$ to denote the number of transversal crossings, \ie{} \(\smash{n^{+}_{ij}} := \max(n_{ij}, 0)\), and similarly define \(\smash{n^-_{ij}} := -\min(n_{ij}, 0)\), which is 1 on shared edges and 0 otherwise.

\begin{figure}[b]
  \includegraphics{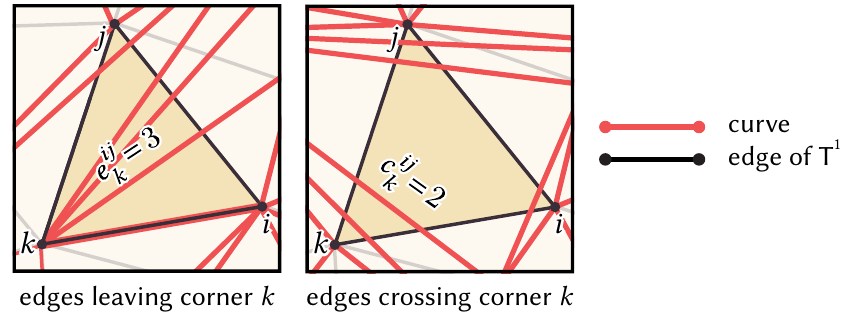}
  \caption{
    It is often useful to count the curves emanating from the interior of each corner (\figloc{left}), and the curves crossing each corner (\figloc{right}).
    \label{fig:CandE}
  }
\end{figure}
Additionally, we define two quantities at each corner of \(\T1\), which count how many curves emanate from the interior of the corner and cross the corner  \resp{} (\figref{CandE}):
\begin{align}
  e_k^{ij} &:= \max(0, n^+_{ij} - n^+_{jk} - n^+_{ki}),  \label{eq:EmanationCount}\\
  c_k^{ij} &:= \tfrac 12 \left( \max\left( 0, n^+_{jk} + n^+_{ki} - n^+_{ij} \right) - e_i^{jk} - e_j^{ki}\right).  \label{eq:CornerCoords}
\end{align}

\paragraph{Crossings.}
A point where a curve crosses an edge \(ij \in \E1\) can be described either in a combinatorial or geometric sense.
A \emph{combinatorial crossing} is given by a pair $\zeta = (\halfedge{ij}\!, p)$, such that \(\zeta\) is the $p^\textrm{th}$ crossing along oriented edge $\halfedge{ij} \in \H1$.
A \emph{geometric crossing} is similarly given by $z = (\halfedge{ij}\!, p, u, v)$, where $u,v$ encode the location of the crossing along the curve and along the $\halfedge{ij}$ (\resp{}) in barycentric coordinates.
Importantly, both of these crossings are \emph{oriented}: the choice of halfedge \(\halfedge{ij}\) versus \(\halfedge{ji}\) indicates which side of the edge one is ``coming from'' and ``going to,'' which will be important when tracing out curves along the surface.  We let $\overline{\zeta} := (\halfedge{ji},n_{ij} - p - 1)$ denote a reversal of orientation.

\paragraph{Roundabouts} Normal coordinates alone do not fully encode the correspondence between \(\T0\) and \(\T1\)\!, because we cannot necessarily determine which curve along $\T1$ corresponds to which edge of $\T0$ (recall that there may be multiple edges of \(\T0\) between the same pair of vertices). Thus we additionally store \emph{roundabouts} \(r: \H1 \to \mathbb{Z}_{\geq 0}\), introduced by \citet[Section 5.2]{Gillespie:2021:DCE}, which describe how the edges of \(\T0\) and \(\T1\) are interleaved around vertices. Unlike Gillespie \etal, we may insert new vertices---however, we still store roundabouts only at halfedges pointing away from shared vertices \(a \in \V0\). These are sufficient to disambiguate the identity of any traced curve, since all edges of \(\E0\) must start and end at vertices in \(\V0\) (if an edge has been split, then the \emph{sequence} of curves starts and ends at vertices in $\V0$).

Precisely, for each halfedge \(\halfedge{aj} \in \H1\) starting at a shared vertex \(a \in \V0\), the roundabout stores the first halfedge \(ab \in \H0\) following \(aj\). This is encoded as an index \(r_{\halfedge{aj}} \in \mathbb{Z}_{\geq 0}\), where we enumerate the halfedges of \(\T0\) about vertex \(a\) in counterclockwise order (\figref{Roundabouts})
\begin{figure}
  \includegraphics{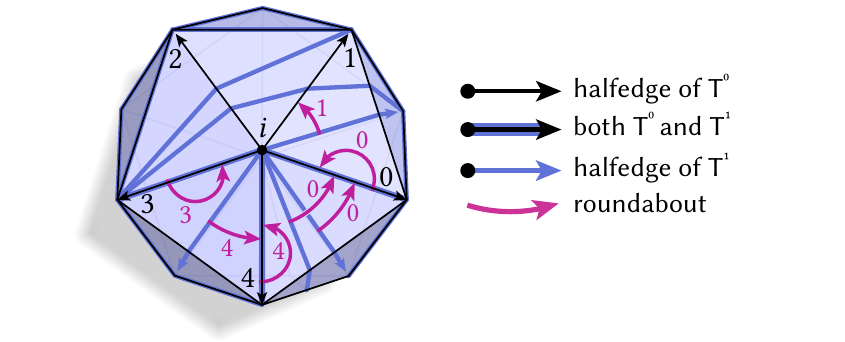}
  \caption{We employ \emph{roundabouts} to encode how edges of \(\T0\) and \(\T1\) are interleaved around vertices.
    \label{fig:Roundabouts}
  }
\end{figure}

\section{Algorithms and Data Structures}
\label{sec:AlgorithmsAndDataStructures}
In this section, we provide descriptions of our data structure and the operations that it supports.
Detailed pseudocode can be found in \appref{Pseudocode}.

\subsection{Data Structure}
\label{sec:DataStructure}

The most essential data to maintain is a mesh of triangulation \(\T1 = (\V1,\E1,\F1)\); recall that \(\V1 \supseteq \V0\).
We use a halfedge mesh, since halfedge meshes can represent general \(\Delta\)-complexes (\secref{Connectivity}), though one could also use a vertex-face adjacency list plus a small amount of additional data~\cite[Section 4.1]{Sharp:2020:LNT}. On top of this mesh, our data structure maintains four quantities:
\begin{itemize}
   \item lengths \(\ell_{ij} \in \mathbb{R}_{>0}\) for each edge \(ij \in \E1\),
   \item normal coordinates \(n_{ij} \in \mathbb{Z}\) for each edge \(ij \in \E1\),
   \item roundabouts \(r_{\he{aj}} \in \mathbb{Z}_{\geq 0}\) for each halfedge \(\halfedge{aj} \in \H1\) incident on a vertex \(a\) shared with \(\T0\),
   \item barycentric coordinates \(\Tpos0i\) relative to \(\T0\) for each \(i \in \V1\).
\end{itemize}

This differs from the scheme of \citet[Section 5]{Gillespie:2021:DCE} in a few key ways. The essential difference is that their data structure assumes that \(\T0\) and \(\T1\) share the same vertex set (\(V^1 = V^0\)). This has numerous consequences---\eg{} they use nonnegative normal coordinates \(n_{ij} \in \mathbb{Z}_{\geq 0}\), they do not store input positions \(\Tpos0i\). Most importantly, it is impossible to perform many of the local mesh operations we describe in the next section without changing the vertex set of \(\T1\); our generalization of the representation is essential if one wants to perform tasks such as Delaunay refinement (\secref{DelaunayRefinement}).

\subsection{Extracting Curves}
\label{sec:ExtractingCurves}

Our first task is to recover a curve on \(\T1\) from its normal coordinates---because our curves are geodesic, we can determine the exact geometry from these normal coordinates.
In particular, we describe a procedure \(\Proc{ExtractCurve}\) (\algref{ExtractCurve}) which takes in any combinatorial crossing $\zeta$ along a curve, and computes the curve's trajectory along \(\T1\) as a sequence $(i, z_1, \dots, z_k, j)$ of geometric crossings along with start and end vertices $i,j \in \V1$.
We note that this mirrors the discussion in \citet[Section 6]{Gillespie:2021:DCE}, albeit in a more general setting; we include a full description here for completeness.

We proceed in two steps, first determining the triangle strip that the curve passes through and only then computing the curve's geometry.
Note that the triangle strip depends solely on the integer-valued normal coordinates $n$, while geometric data and floating point computation are relegated to the second step.

The first step is performed by \Proc{TraceFrom} (\algref{TraceFrom}), which takes some combinatorial crossing \(\zeta = (\halfedge{ij}, p)\) along a curve \(\gamma\), and traces out the remaining combinatorial crossings until \(\gamma\) terminates at a vertex.
\Proc{TraceFrom} proceeds iteratively, taking the crossing where \(\gamma\) enters a triangle, and using the triangle's normal coordinates to determine where \(\gamma\) exits (see \figref{TraceFromCases}).
The direction in which to trace the curve is determined by the orientation of $\halfedge{ij}$.

To determine the triangle strip containing \(\gamma\), \Proc{ExtractCurve} calls \Proc{TraceFrom} once in either direction, yielding the sequence of all combinatorial crossings along \(\gamma\).
\Proc{ExtractCurve} then unfolds this triangle strip in an arbitrary planar coordinate system and draws \(\gamma\) as a straight line between its endpoints.
The intersection of this line with each of the intermediate edges determines the geometric crossings along \(\gamma\) (\figref{TriangleStripLayout}).
Note that unlike \citet[Algorithm 1]{Gillespie:2021:DCE}, we may have to invoke \Proc{ExtractCurve} multiple times to extract a single edge \(ab \in \E0\), as it may pass through several vertices of \(\T1\), \eg{} due to \emph{edge splits} (\secref{EdgeSplit}).

Finally, it is sometimes useful to convert a single combinatorial crossing \(\zeta\) into a geometric crossing \(z\). We will refer to this operation as \Proc{ExtractGeometricCrossing}; it may be implemented by calling \Proc{ExtractCurve} and then returning the single desired crossing.

\begin{figure}
  \includegraphics{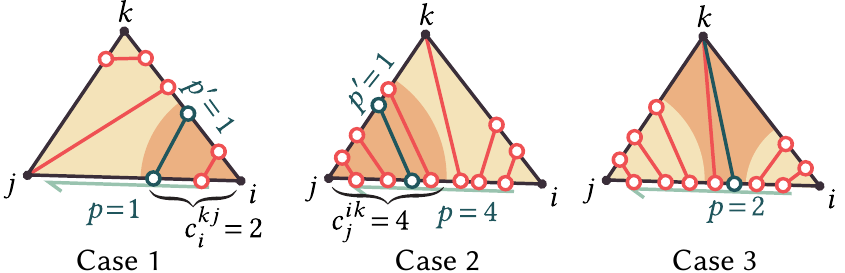}
  \caption{A curve entering triangle \(jik\) along edge \(ij\) can proceed in 3 ways: it can exit along edge \(ik\), in which case it is counted by \(c_i^{kj}\) (\figloc{left}); it can exit along edge \(kj\), in which case it is counted by \(c_j^{ik}\) (\figloc{center}); or it can terminate at vertex \(k\) (\figloc{right}). This forms the core of procedure \Proc{TraceFrom}. \label{fig:TraceFromCases}}
\end{figure}
\begin{figure}
  \includegraphics{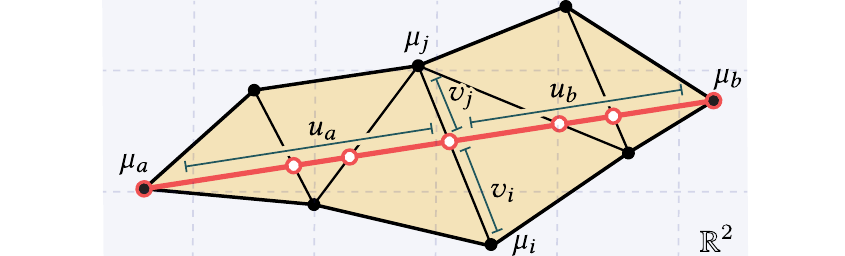}
  \caption{We compute barycentric coordinates by laying out a triangle strip in the plane. \label{fig:TriangleStripLayout}}
\end{figure}

\subsection{Edge Flip}
\label{sec:EdgeFlip}
\begin{figure}
  \includegraphics{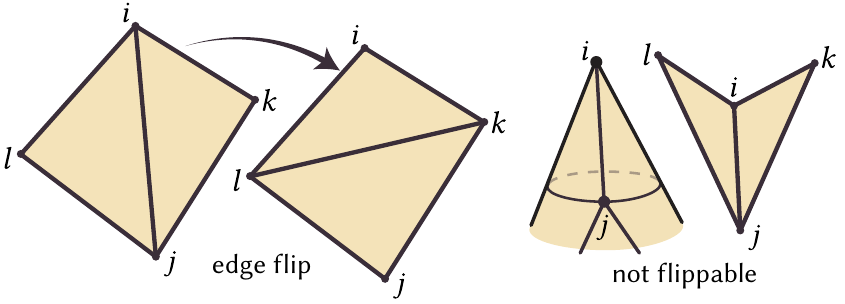}
  \caption{An edge flip replaces an edge with its opposite diagonal (\figloc{left}). An edge \(ij\) is not be flippable if it would leave a vertex with degree zero, or if its neighboring faces form a nonconvex quadrilateral (\figloc{right}). \label{fig:Flippability}}
\end{figure}

Edge flips are a well-studied operation, but we include a discussion here for completeness. We may flip and edge if and only if (i) both endpoints have degree at least one after the flip, and (ii) the two triangles containing the edge form a convex quadrilateral (\figref{Flippability}).

\paragraph{Mesh Update} We replace edge \(ij\) with an edge \(kl\).
We compute the new edge length \(\Tl1_{lk}\) by laying out the two old triangles \(ijk, lji\) in the plane and measuring the length of the appropriate diagonal.

\paragraph{Normal Coordinates \& Roundabouts}
The new normal coordinate \(n_{kl}\) does not depend at all on the geometry of \(\T1\). It is given by the following formula:
\begin{equation}
  \begin{aligned}
    n_{kl} = c_l^{jk} + c_k^{ij}  + &\tfrac12\left|c_j^{il} - c_j^{ki}\right| + \tfrac12\left| c_i^{lj}-c_i^{jk} \right| - \tfrac12 e_l^{ji} - \tfrac12 e_k^{ij} \\
    &\quad           + e_i^{lj} + e_i^{jk} + e_j^{il} + e_j^{ki} +  n_{ij}^- .
\end{aligned}
\end{equation}
This differs slightly from the formula of \citet{Gillespie:2021:DCE}, which did not allow for inserted vertices.

We can update each roundabout from its previous neighbor:
\begin{equation}
  r_{\he{kl}} = \mod\left(r_{\he{ki}}  + e_k^{il} + n_{ki}^-, \deg_0(k)\right),
  \label{eq:RoundaboutUpdate}
\end{equation}
where \(\deg_0(k)\) is the degree of vertex \(k\) in triangulation \(\T0\). The quantity \(\smash{e_k^{il} +n_{ki}^-}\) counts how many edges of \(\T0\) are between \(\halfedge{ki}\) and \(\halfedge{kl}\): \(\smash{e_k^{il}}\) counts edges strictly between them, and \(\smash{n_{ki}^-}\) adds one if there is also an edge lying exactly along \(\halfedge{kl}\).

\vspace{0.25\baselineskip}
\noindent\includegraphics{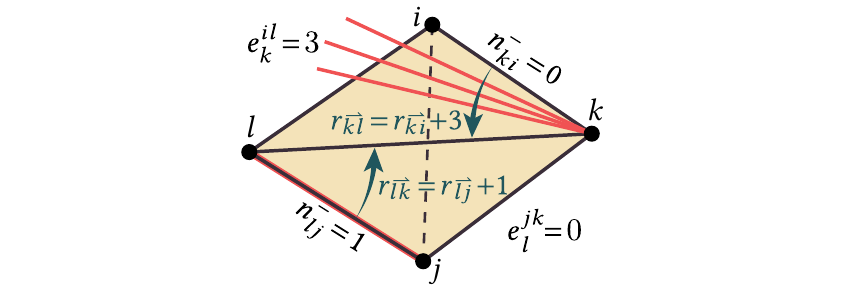}
\vspace{-0.25\baselineskip}
We only perform this update for halfedges whose source is in \(\V0\).

\subsection{Face Split}
\label{sec:FaceSplit}

We now describe procedure \Proc{SplitFace} (\!\algrefs{SplitFace1, SplitFace2}), the first of several new routines to mutate triangulation \(\T1\) while tracking the correspondence with \(\T0\).  Note that \citet[Section 5.4]{Schaefer:2002:ANC} describe a similar face split operation in the topological setting, but do not provide the ability to insert a point at a particular geometric location, which is essential in our setting of Euclidean polyhedra.

In particular, suppose we wish to insert a vertex at a point \(x \in M\), given by barycentric coordinates \(v\) on a triangle \(ijk \in \F1\).
To do so, we need to update the connectivity \(\T1\), edge lengths \(\Tl1\), normal coordinates \(n\), and roundabouts \(r\). Additionally, we need to compute the position \(\Tpos0{}\) of this new vertex in barycentric coordinates on \(\T0\).

\paragraph{Mesh Update}
We update the connectivity of $\T1$ with a new vertex and three new edges and faces.
We compute new edge lengths as a formula of the barycentric coordinates \(u\).
\citet[Section 3.2]{Schindler:2012:Barycentric} show that the length of a displacement vector \(\delta u_i\) in barycentric coordinates is give by
\begin{equation}
  \|\delta u_i\|^2 = -\ell_{ij}^2 \delta u_i \delta u_j - \ell_{jk}^2 \delta u_j \delta u_k - \ell_{ki}^2 \delta u_k \delta u_i.
  \label{eq:NewEdgeLengths}
\end{equation}

\setlength{\columnsep}{.5em}
\setlength{\intextsep}{.5em}
\begin{wrapfigure}{l}{54pt}
  \vspace{-0\baselineskip}\includegraphics{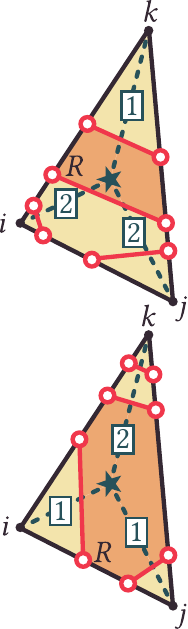}
\end{wrapfigure}
\paragraph{Normal Coordinates \& Roundabouts}
Unlike the case of an edge flip, where the new normal coordinate depends solely on the initial normal coordinates, vertex insertion is an inherently geometric operation.
Different points necessarily result in different normal coordinates (see inset), depending on the region \(R\) in which the point lies.

Concretely, we first compute the geometric crossings of all curves passing through face $ijk$ (using the \Proc{ExtractGeometricCrossing} subroutine).
We then determine which region $R$ the new point lies in via a series of line-side tests.
(One might in principle be able to reduce the number of curves that need to be extracted via lazy evaluation or caching, though we do not pursue such optimizations here.)
We may misclassify points extremely close to a region's boundary due to floating point error, in which case we insert a valid point in the identified region, at a virtually identical location.  Note that this behavior is perfectly reasonable in, \eg{}, retriangulation algorithms (see \secref{DelaunayRefinement}), where the insertion location is not computed exactly anyway.

The roundabouts on any new halfedges emanating from original vertices (\ie, vertices in \(\{i,j,k\} \cap \V0\)) can be set from their neighbors via \eqref{RoundaboutUpdate}.

\paragraph{Position on \(\T0\)}
To determine \(\Tpos0{}\), we must locate the triangle \(abc \in \F0\) containing \(x\), as well as the barycentric coordinates of \(x\) within \(abc\).
We do so via interpolation from the corners of $R$.
Explicitly, the corners of $R$ are all geometric crossings with known barycentric coordinates in some triangle \(abc \in \F0\) (computed in \Proc{ExtractGeometricCrossing}); we can then solve a small linear system to recover the barycentric coordinates of $p$ in the same triangle.
Intuitively, we recover generalized barycentric coordinates for $p$ with respect to the polygon $R$ and apply them on $abc$ to recover standard barycentric coordinates in $\T0$ (see \appref{RecoverBarycentric} for details).

\subsection{Edge Split}
\label{sec:EdgeSplit}
\begin{figure}
  \includegraphics{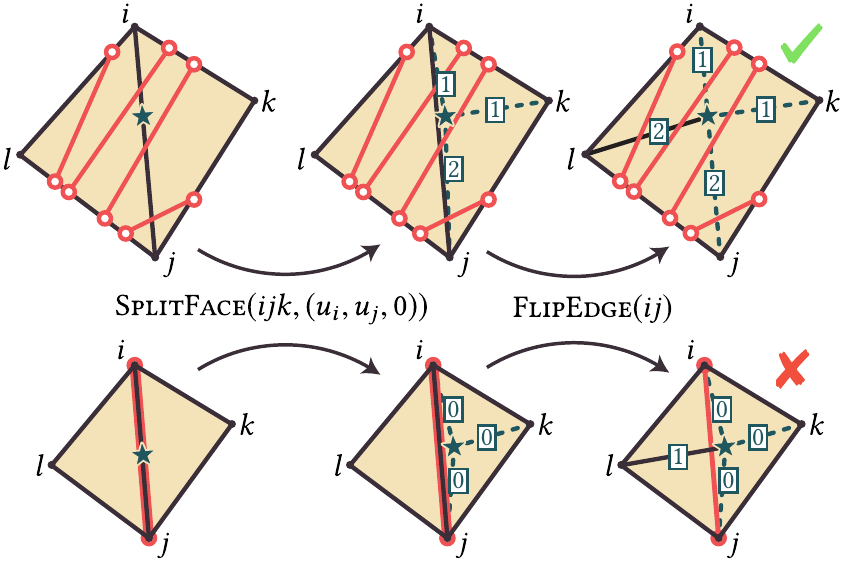}
  \caption{
    Generally, one can split edge \(ij\) by performing a face split on a neighboring face followed by an edge flip (\figloc{top}). However, if \(ij\) carries a curve, this strategy will cause the insertec vertex to miss the curve (\figloc{bottom}). We hence provide a different edge split procedure for this case in \secref{EdgeSplit}.
    \label{fig:FaceSplitEdgeFlip}
  }
\end{figure}

We also introduce an operation \Proc{SplitEdge} (\algref{SplitEdge}), which takes as input a point given by barycentric coordinates \(v\) along an oriented edge \(\halfedge{ij} \in \H1\).
If \(ij\) does not have a curve running along it (\ie{} \(n_{ij} \geq 0\)), then this is implemented as a face split followed by an edge flip (\figref{FaceSplitEdgeFlip}, \figloc{top}).
However, if \(n_{ij} < 0\) (which is common in practice---\eg{} \secref{DelaunayRefinement}), we perform an explicit edge split which inserts the new vertex along the coincident curve (\figref{FaceSplitEdgeFlip}, \figloc{bottom}).

\paragraph{Mesh Update}
We insert a new vertex and triangulate any adjacent faces, computing the new edge lengths via \eqref{NewEdgeLengths}.

\setlength{\columnsep}{.5em}
\setlength{\intextsep}{.5em}
\begin{wrapfigure}{l}{44pt}
  \vspace{-0.5\baselineskip}\includegraphics{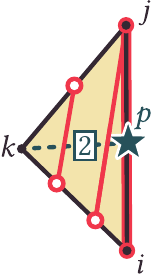}
\end{wrapfigure}
\paragraph{Normal Coordinates \& Roundabouts}
When \(n_{ij} < 0\) the new normal coordinates are simple functions of the old ones, since every curve in face \(ijk\) must emanate from \(i\) or \(j\), or cross \(k\).
The number of such curves is \(\max(n_{ki}, n_{jk}, 0)\) and edge \(pk\) crosses them all.
We hence set \(n_{pj}\) and \(n_{pi}\) equal to \(n_{ij}\), and set
\begin{equation}
  n_{pk} = \max(n_{ki}, n_{jk}, 0).
  \label{eq:NormalCoordinateEdgeSplit}
\end{equation}

As with face splits, roundabouts on any new halfedges emanating from original vertices can be set from their neighbors (\eqref{RoundaboutUpdate}).

\paragraph{Position on \(\T0\)}
To determine \(\Tpos0{}\), we must locate the edge \(ab \in \E0\) containing \(x\), as well as the barycentric coordinates of \(x\) within \(ab\).
Since, \(i\) and \(j\) necessarily have known locations along some edge in $ab \in \E0$, we can simply interpolate by $v$ to compute the location \(\Tpos0{}\).

\subsection{Vertex Removal}
\label{sec:VertexRemoval}

In general, a vertex which is present in the original triangulation cannot be removed without distorting the intrinsic metric because any curvature at that vertex would be lost.  However, inserted vertices \(i \in \Vinserted\) have no curvature, and can hence be removed safely. In fact this operation will be necessary for Delaunay refinement of domains with boundary (\secref{DelaunayRefinement}).

The basic strategy behind \Proc{RemoveVertex} (\algref{RemoveVertex}) is to flip edges incident on the vertex to be removed until it has degree three, then delete the three edges incident on the vertex as well as the vertex itself.
No other data needs to be updated, since the edges of the resulting triangle already appear in the triangulation.
\algref{RemoveVertex} describes this procedure, and \thmref{VertexRemovalFlipping} proves its correctness for simplicial complexes.
A nearly identical procedure can be used to remove an inserted boundary vertex.
\citet[Section 5.4]{Schaefer:2002:ANC} also suggest a similar flipping procedure, but work in the topological setting where the necessary edge flips are always valid---they do not consider the convexity condition (\secref{EdgeFlip}).

\subsection{Moving Inserted Vertices}
\label{sec:MovingInsertedVertices}

Given the previous operations, we can easily define a procedure for moving around inserted vertices. Specifically, given a vector \(v\) in the tangent space of an inserted vertex \(i\), we can move \(i\) along \(v\) in the following way:
\begin{itemize}
   \item First, compute the new location \(p = \exp_i(v)\).
   \item Insert \(p\) using \Proc{SplitFace}.
   \item Remove \(i\) using \Proc{RemoveVertex}.
\end{itemize}
We insert \(p\) first since the removal procedure could flip edges incident on the triangle containing \(p\), invalidating its barycentric coordinates.  Note that Sharp \etal{} propose an alternative strategy for local vertex displacement~\cite[Section 3.3.3]{Sharp:2019:NIT}.

\subsection{Common Subdivision}

\label{sec:CommonSubdivision}
As noted previously, the \emph{common subdivision} $S$ of $\T0$ and $\T1$ is the polygon mesh obtained by ``slicing up'' the underlying surface along the edges of both \(\T0\) and \(\T1\).  The vertices of \(S\) are hence a superset of \(\V0\) and \(\V1\), and every edge or face of \(\T0\) and \(\T1\) can be expressed as union of edges or faces of \(S\) (\resp).  Moreover, the faces of \(S\) are always planar and convex.  Most importantly in our setting, any piecewise-linear function on $\T0$ or $\T1$ can be represented exactly as a piecewise-linear function on $S$.  Note however that even if \(\T0\) and \(\T1\) have nice elements, $S$ is not in general a high-quality mesh, and may not itself be suitable for, \eg{}, solving PDEs.  Rather, it plays a complementary role in the geometry processing pipeline, enabling (for instance) transfer of data between triangulations (\secref{AttributeTransfer}), or visualization of data downstream via standard rendering tools.

\begin{figure}
  \begin{center}
  \end{center}
  \includegraphics{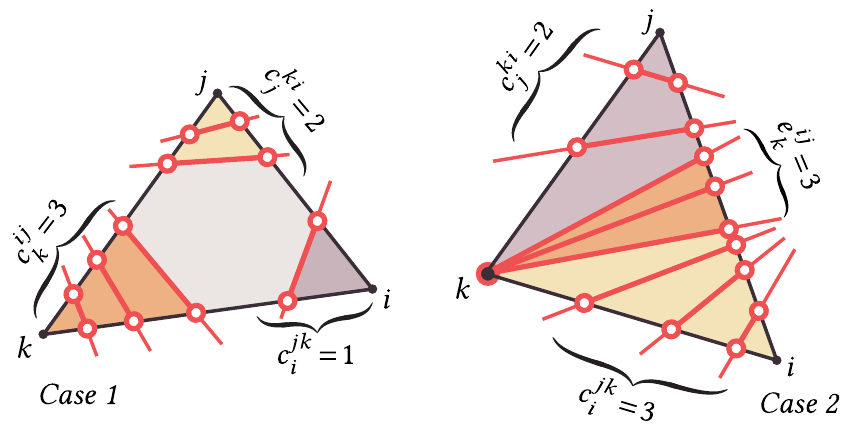}
  \caption{
    We extract the connectivity of common subdivision within each triangle using its normal coordinates.
    \label{fig:CommonSubdivision}
  }
\end{figure}

We compute the common subdivision by cutting \(\T1\) along the edges of \(\T0\).
First we extract the connectivity of \(S\), using only the normal coordinates \(n_{ij}\).
Then we recover the intersection geometry, allowing us to interpolate data stored at the vertices of \(\T0\) or \(\T1\) to \(S\)---most commonly, vertex positions on \(\T0\) along with any solution data on \(\T1\).
Note that this procedure was previously described by \citet[Section 3.4.2]{Sharp:2019:NIT}; we recap it here for completeness, and to give a convenient description using our integer coordinates.

\paragraph{Connectivity.}
We subdivide \(\T1\) independently in each face \(ijk\).
The normal coordinates alone determine the connectivity of \(S\) within this face.
There are just two cases to consider, illustrated in \figref{CommonSubdivision}.
Case 1 occurs when no curves emanate from any corner, so we simply need to connect the first \(\smash{c_i^{jk}}\) crossings along edge \(ij\) to the first \(\smash{c_i^{jk}}\) crossings along \(ik\) (in order), and likewise for corners \(j\) and \(k\).
In Case 2 curves emanate from some corner; without loss of generality, let this corner be \(k\) so that the number of such curves is \(\smash{e_k^{ij}} > 0\) and there are more curves crossing edge \(ij\) than the other two edges.
Hence, we can walk from \(i\) to \(j\), connecting the first \(\smash{c_i^{jk}}\) crossings to those along \(ik\), the next \(\smash{e_i^{jk}}\) crossings to vertex \(k\), and the remaining \(\smash{c_j^{ki}}\) crossings to those along edge \(kj\).
Note that curves running along edges (\(n_{ij} < 0\)) require no special treatment.

\paragraph{Intersection Geometry.} Next, we associate each vertex \(i\) of the common subdivision with a point in \(\T0\) and a point in \(\T1\), encoded in barycentric coordinates relative to some simplex (vertex, edge, or face) of the appropriate triangulation.
Using this, one can linearly interpolate data in the usual way.
Again, there are just two cases: each vertex \(i\) in \(S\) is either a vertex of \(\T1\) or the intersection of an edge of \(\T0\) with an edge of \(\T1\).
In the first case, the position on \(\T1\) is given by \(i\) itself, and its position \(\Tpos0i\) on \(\T0\) was computed when \(i\) was inserted.
In the second case, we compute the desired barycentric coordinates using \Proc{ExtractCurve} (see \algref{ComputeCommonSubdivision} for details).

\subsection{Transposing Coordinates}
\label{sec:TransposingCoordinates}

Throughout, we store normal coordinates \(n:\E1 \to \mathbb{Z}\) which count how many times edges of \(\E1\) cross edges of \(\E0\).
It is sometimes useful to observe that tracing an edge \(ab \in \E0\) over \(\T1\) counts how many times \(ab\) crosses edges of \(\T1\).
If \(\T1\) has more vertices than \(\T0\), then the edges of \(\E1\) are not \emph{normal} over \(\T0\)---they can start and end in the middle of faces of \(\T0\)---and these crossing counts do not uniquely encode the structure of \(\T1\).
However, if \(\T0\) and \(\T1\) do have the same vertex set, then these crossing counts provide an implicit representation of \(\T1\) as a collection of curves over \(\T0\).

\subsection{Visualization}
\label{sec:Visualization}

In the following, we show examples of intrinsic triangulations on top of meshes (\eg{} \figref{IDTR}).
To produce these figures, we compute the common subdivision (\secref{CommonSubdivision}) and draw the edges of the input mesh with a black wireframe while coloring the intrinsic triangles in arbitrarily-chosen colors.
In figures displaying functions defined on intrinsic triangulations (\eg{} \figref{HeatMethod}), we interpolate the solutions along the common subdivision for rendering.

\subsection{Robust Implementation}
\label{sec:RobustImplementation}

Our integer coordinates are guaranteed to encode a triangulation sitting atop \(\T1\).
The geometric accuracy of this triangulation, of course, depends on floating point arithmetic, which can become inaccurate in near-degenerate configurations.
\emph{Exact predicates} have been applied with great success to similar problems~\cite{Devillers:2003:EEG}.
Unfortunately they do not directly apply to intrinsic triangulations, as the predicates that we evaluate are not fixed functions of the input data; an intrinsic edge length can depend upon arbitrarily many input edge lengths.
Hence, we focus on fast and robust implementations using ordinary floating point arithmetic.

One essential tool for dealing with intrinsic triangulations on near-degenerate input meshes is \emph{intrinsic mollification}, introduced by \citet{Sharp:2020:LNT}. Mollification improves degenerate meshes by adding a small \(\epsilon\) to every edge length, provably improving triangle quality. This changes the geometry by a negligible amount, and moreover we only mollify if some triangle is within \(\epsilon\) of being degenerate. This procedure works particularly well with our data structure compared to signposts: the signpost data structure relies on tracing queries along the surface which become \emph{less} accurate when mollification is applied. Our integer coordinates have no such problem: we always get the correct edge sequence, even if the mesh geometry is slightly modified.  In our experiments we mollify with \(\epsilon = 10^{-5}\), and find that it resolves almost all numerical difficulties.

Even after mollification, it is still beneficial to use care when working with floating point. For example, there are well-conditioned triangles on which the Delaunay condition (\eqref{DelaunayCotans}, discussed in the next section) is difficult to evaluate; in practice, we only enforce \eqref{DelaunayCotans} up to some \(\epsilon\) tolerance. As a further example, when computing new normal coordinates in \Proc{SplitFace}, one could lay out the face in the plane, and independently count intersections along the new edges. However, this can produce invalid normal coordinates in floating point. We apply a more complicated policy (see \appref{Pseudocode}) which always yields valid normal coordinates. For additional details on all procedures, we refer the reader to our implementation, which will be made available after review.

\subsection{Other Algorithms}
\label{sec:OtherAlgorithms}

Normal coordinates also enable a wide variety of other operations not detailed here.  For instance, \citet[Section 5]{Schaefer:2002:ANC} provide algorithms for counting connected components, checking if crossings are part of the same curve, checking if curves are isotopic, and computing the oriented intersection number.  \citet{Erickson:2013:TCC} provide an asymptotically-fast algorithm for tracing normal curves across a surface.  Finally, \citet[Proposition 13]{Dynnikov:2020:CINC} provides an algorithm for computing how many times curves represented by normal coordinates intersect.

\section{Applications}
\label{sec:Applications}

\subsection{Intrinsic Delaunay Triangulations}
\label{sec:IntrinsicDelaunay}

\begin{figure}
\begin{center}
\end{center}
    \includegraphics[width=\columnwidth]{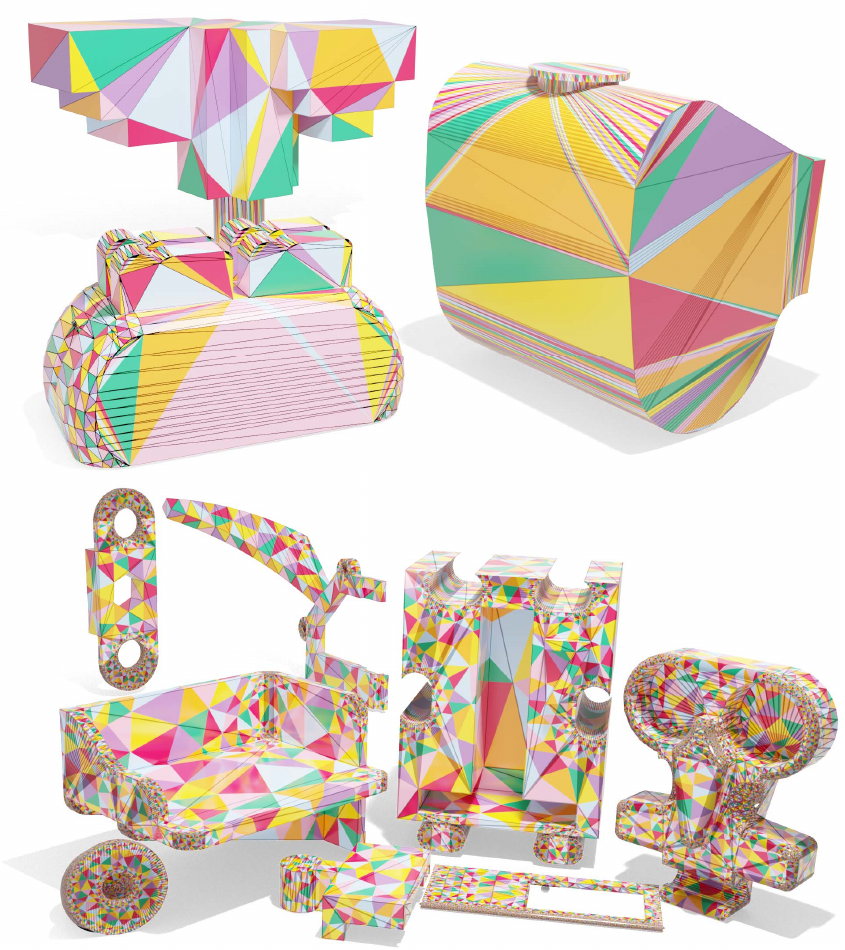}
    \caption{
      Using our integer-based data structure, we can not only improve near-degenerate meshes by generating \emph{intrinsic Delaunay triangulations} (\figloc{top}), but can also extract the common subdivision after computing a high-quality \emph{intrinsic Delaunay refinement} (\figloc{bottom}).
      \label{fig:IDTR}
    }
\end{figure}
\begin{wrapfigure}[4]{r}{70pt}
  \vspace{-1.5\baselineskip}\includegraphics{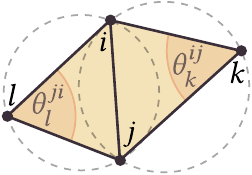}
\end{wrapfigure}
One key application of intrinsic triangulations is the computation of \emph{intrinsic Delaunay triangulations} (\figref{IDTR}, \figloc{top}).
A triangulation is said to be Delaunay if the sum of angles opposite every edge is at most \(\pi\), \ie{} for every \(ij \in E\) we have
\begin{equation}
    \theta_k^{ij} + \theta_{l}^{ji} \leq \pi.
    \label{eq:DelaunayCondition}
\end{equation}

Delaunay triangulation have a number of beneficial properties.
One consequence of \eqref{DelaunayCondition} is that edges of a Delaunay triangulation must have nonnegative \emph{cotan weights}:
\begin{equation}
  \cot \theta_k^{ij} + \cot \theta_l^{ji} \geq 0.
  \label{eq:DelaunayCotans}
\end{equation}
In fact, \eqref{DelaunayCotans} is equivalent to \eqref{DelaunayCondition} above, and provides a convenient formula for checking the Delaunay property in an intrinsic triangulation. Moreover, \eqref{DelaunayCotans} ensures that the finite element Laplacian \(L\) satisfies the \emph{maximum principle}, guaranteeing that discrete harmonic functions do not have local extrema in the interior of the domain \cite[Proposition 19]{Bobenko:2007:ADL}.
Similarly \eqref{DelaunayCotans} also ensures that discrete harmonic vector fields are ``flip-free'' \cite[Section 5.4]{Sharp:2019:VHM}.
Furthermore, the local Delaunay condition implies the \emph{empty circumcircle property}: each triangle's geodesic circumdisk contains no vertices, illustrated in \figref{Circumcenter}, \figloc{left} \cite[Proposition 10]{Bobenko:2007:ADL}.

The Delaunay triangulation can be computed via a simple greedy algorithm: flip any non-Delaunay edge until all edges satisfy \eqref{DelaunayCondition} \cite[Propositions 11 and 12]{Bobenko:2007:ADL}.

\subsection{Intrinsic Delaunay Refinement}
\label{sec:DelaunayRefinement}

Delaunay refinement inserts vertices in order to produce a Delaunay mesh whose triangles all satisfy a minimum angle bound (\figref{IDTR}, \figloc{bottom}).
Here we modify \emph{Chew's second algorithm} to perform intrinsic Delaunay refinement~\cite{Chew:1993:GQMG, Shewchuk:1997:DRMG}.
This problem has been extensively studied in the plane, but an intrinsic (\ie{} geodesic) scheme was only recently proposed by \citet[Section 4.2]{Sharp:2019:NIT}.
However, they did not handle meshes with boundary---here we resolve the essential difficulties of the boundary case, and show how refinement can be implemented using our integer-based data structure.

In the plane, the basic algorithm is to greedily pick any triangle which violates the minimum angle bound, insert a vertex at its circumcenter, then flip to Delaunay.
This process continues until all triangles satisfy the angle bound.
If a triangle's circumcenter is outside the domain, then the boundary edge \(ij\) separating the triangle from its circumcenter is split at its midpoint; subsequently, all interior vertices within at least a distance of \(\ell_{ij}/2\) are removed---though removing additional interior vertices causes no issues (\appref{RemovingExtraVertices}).
One can prove that this process succeeds for minimum angle bounds up to 25.65 degrees on planar domains with boundary angles at least \(60^\circ\)~\cite[Section 3.4.2]{Shewchuk:1997:DRMG}.
More advanced versions of this procedure can achieve better angle bounds, \eg{} \cite{Rand:2011:WWC}, but here we restrict our attention to the basic algorithm for simplicity.

\begin{figure}
  \includegraphics{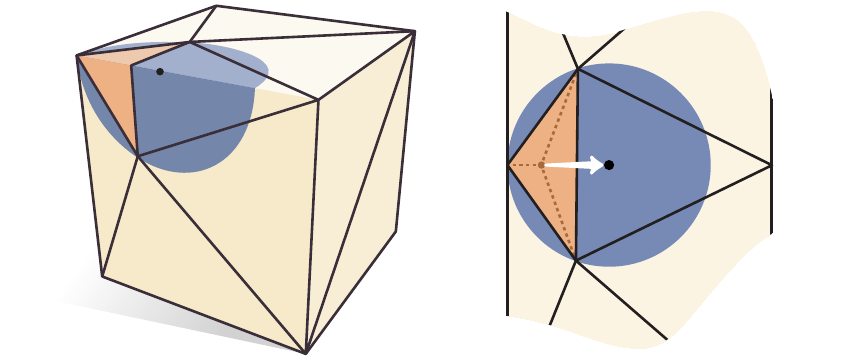}
  \caption{
    Triangles in Delaunay meshes have empty circumdisks, and thus well-defined circumcenters (\figloc{left}). When necessary, we locate a triangle's circumcenter by walking outwards from its barycenter (\figloc{right}).
    \label{fig:Circumcenter}
  }
\end{figure}
There are two difficulties in adapting this algorithm to the intrinsic setting: locating circumcenters and computing (geodesic) distances.
As mentioned earlier, intrinsic Delaunay triangulations obey the empty circumcircle property; hence each triangle has an intrinsically-flat circumdisk with a well-defined center (\figref{Circumcenter}, \figloc{left}).
So long as this center corresponds to a point on the surface, it can be found by walking from the triangle's barycenter (\figref{Circumcenter}, \figloc{right}).
In practice, we compute triangle \(ijk\)'s circumcenter in homogeneous (\ie, unnormalized) barycentric coordinates \(\hat v_i\) via the following formula \cite[Section 2.3]{Schindler:2012:Barycentric}:
\begin{equation}
  \hat v_i := \ell_{jk}^2 (\ell_{ij}^2 + \ell_{ki}^2 - \ell_{jk}^2),
  \label{eq:HomogeneousCircumcenterBary}
\end{equation}
and then normalize to obtain barycentric coordinates
\begin{equation}
  v_i := \tfrac{\hat v_i} {\hat v_i + \hat v_j + \hat v_k}.
  \label{eq:CircumcenterBary}
\end{equation}
To locate the circumcenter on the surface, we then evaluate the exponential map (\secref{Geometry}) starting at the barycenter \(w_i = w_j = w_k = 1/3\), along the vector \(v-w\).  If we hit a boundary edge \(ij\) while tracing out this path, then the circumcenter is not contained in the surface, so we split \(ij\) at its midpoint and flip to Delaunay.  We must then remove all inserted interior vertices within a geodesic ball of radius \(\ell_{ij}/2\) centered at the inserted point.  Computing geodesic distance on a surface mesh is nontrivial, but \citet[Corollary 1]{Xia:2013:TSF} shows that on a Delaunay triangulation any vertex inside a geodesic ball of radius \(r\) will also be inside the Dijkstra ball of radius \(2r\) (\ie{} points whose distance along the edge graph are at most \(2r\)).  We hence remove all interior inserted vertices within a Dijkstra distance of \(\ell_{ij}\).  Note that while Xia considers only the planar setting, their proof (which is based on triangle strips) applies without modification to intrinsic Delaunay triangulations of surfaces.

\setlength{\columnsep}{.5em}
\setlength{\intextsep}{.5em}
\begin{wrapfigure}{r}{40pt}
  \vspace{-0.5\baselineskip}\includegraphics{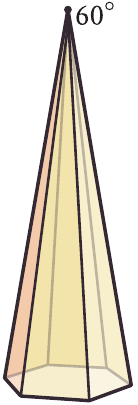}
\end{wrapfigure}
On meshes with narrow cone vertices or boundary angles, it may be impossible to find any triangulation satisfying a given angle bound. In such cases, we do not insert circumcenters of intrinsic triangles which are incident on exactly one narrow vertex, or are entirely contained in a triangle of \(\T0\) which is incident a narrow vertex, and ignore such triangles when computing the minimum corner angle of the output mesh. Although the final output may violate the angle bound, such triangles appear only near narrow vertices. In analogy with the planar case, we set \(60^\circ\) as the minimum allowed angle sum (see inset); in practice the vast majority of meshes obey this constraint at all vertices (97.2\% of Thingi10k), and even on those which do not we obtain high-quality triangulations.

\subsection{PDE-Based Geometry Processing}
\label{sec:PDEBasedGeometryProcessing}

\begin{figure}
  \begin{center}
  \end{center}
  \includegraphics{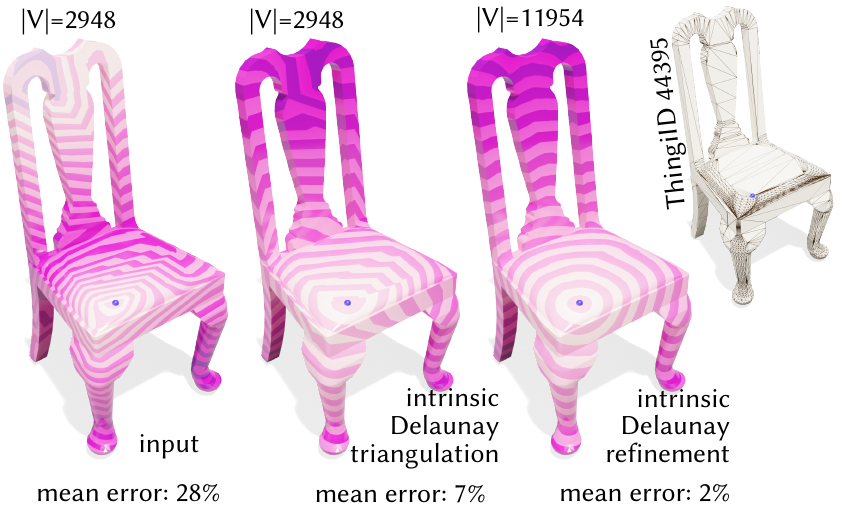}
  \caption{
    Running PDE-based algorithms such as the heat method on poor triangulations (\figloc{left}) can lead to inaccurate solutions. Flipping to intrinsic Delaunay (\figloc{center}) and performing Delaunay refinement (\figloc{right}) can drastically improve the results.
    \label{fig:HeatMethod}
  }
\end{figure}

PDE-based methods abound in geometry processing, as they are generally simple to implement and benefit from decades of research into linear solvers,
Many such methods depend only on intrinsic data, and are hence a natural application of our intrinsic Delaunay triangulations and refinements.
On near-degenerate inputs, simply running the standard algorithm on an intrinsic triangulation instead of the original mesh yields solutions of dramatically higher quality (\figrefs{HeatMethod, LogMapVectorField}).

We show several examples: fast geodesic distance computation \cite{Crane:2017:HMD}, local parameterization via the logarithmic map \cite{Sharp:2019:VHM}, and smooth vector fields \cite{Knoppel:2013:GOD}.
Further examples on tasks such as parameterization, minimal surfaces, and surface editing can be found in \cite{Sharp:2019:NIT,Sharp:2020:LNT}.
Across the board, normal coordinates offer improved robustness guarantees, and open doors to higher solution accuracy with the common subdivision.

\begin{figure}
  \begin{center}
  \end{center}
  \includegraphics{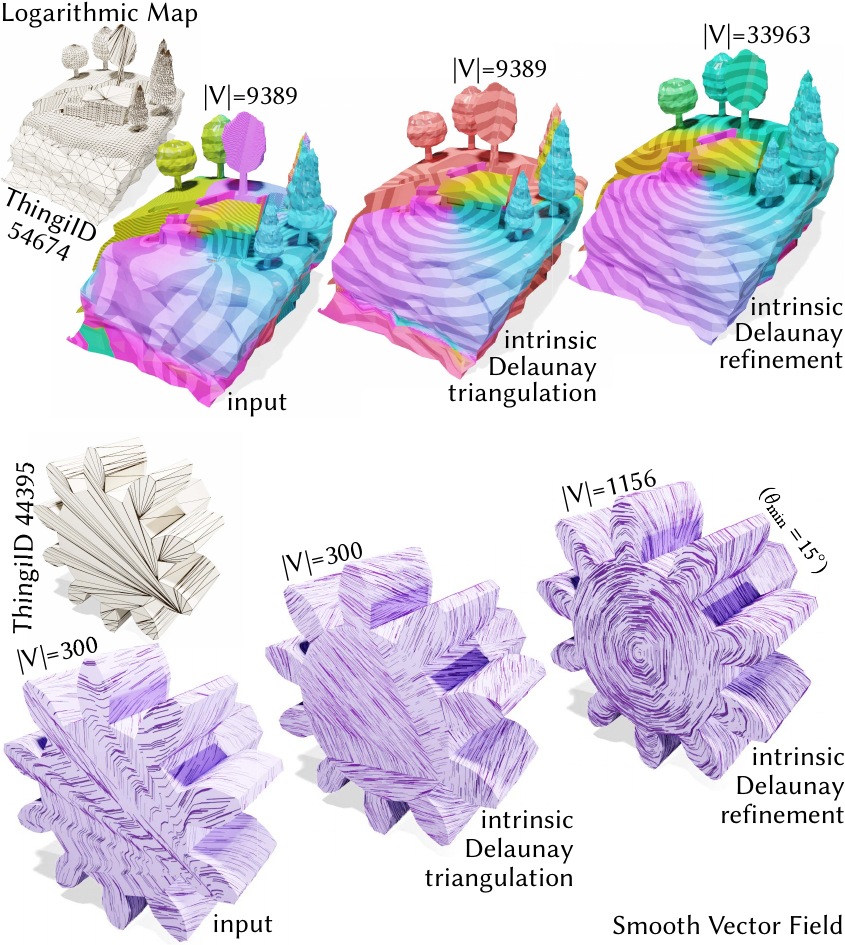}
  \caption{
    Here we compute a local parameterization (the \emph{logarithmic map}, \figloc{top}), and a smooth vector field (\figloc{bottom}) using the connection Laplacian.
    Both procedures yield inaccurate results on near-degenerate inputs (\figloc{left})---intrinsic Delaunay triangulations (\figloc{center}) and intrinsic Delaunay refinements (\figloc{right}) greatly improve solution quality.
    Whether our solution is a scalar function or vector field, we can visualize it on the common refinement.
  \label{fig:LogMapVectorField}
}
\end{figure}

\subsection{Attribute Transfer}
\label{sec:AttributeTransfer}

\begin{figure}
\begin{center}
\end{center}
    \includegraphics{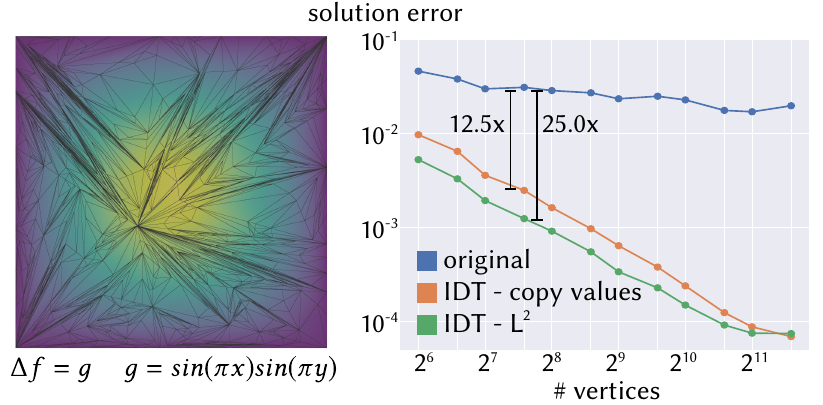}
    \caption{
      Accuracy is improved by transferring PDE solutions back to an original triangulation as the $L^2$-nearest solution, evaluated via the common subdivision.
      Here we generate random low-quality meshes of the unit square by random edge splits (\figloc{left}), and plot the error in the solution of a Poisson equation compared to analytic ground truth, always represented in the basis of the original triangulation (\figloc{right}).
      Each data point is the average error over 100 trials.
      As expected, solving on the intrinsic Delaunay triangulation dramatically increases accuracy, but further improvements are gained by choosing the solution on the original mesh which is $L^2$-nearest to the intrinsic solution, rather than naively copying vertex values.
      \label{fig:AttributeTransferAccuracy}
    }
\end{figure}

Intrinsic triangulations can drastically improve the quality of solutions to PDEs on low-quality meshes.
However in practice, one often needs to represent the solution on the input mesh.
Past approaches have simply ``copied back'' the solution values at vertices of the original mesh, but this strategy is ad-hoc and suboptimal.
A more principled approach is to choose the function on the original mesh which is closest to the intrinsic solution.
Here, we restrict our treatment to piecewise-linear bases and $L^2$ distance for simplicity, though the same strategy could easily be applied to other basis functions and notions of distance.
Precisely, given a function $f$ on the intrinsic triangulation, we seek $\hat f$ on the original mesh that minimizes the squared $L^2$ distance
\begin{equation}
  \|f - \hat{f}\|^2_{L^2} := \int_M |f(x) - \hat{f}(x)|^2 dx.
\end{equation}
Here, $f$ and $\hat f$ are functions represented in finite-dimensional bases with nodal values at the vertices of the intrinsic triangulation and the original mesh \resp{}
In traditional finite elements, this integral commonly arises over a \emph{single} triangulation, in which case it can be evaluated via the Galerkin mass matrx $\Msf$ as 
\begin{equation}
  \|f - \hat{f}\|^2_{L^2} = (f - \hat{f})^{T} \Msf (f - \hat{f}),
\end{equation}
where $\Msf$ is constructed as in \cite[Chapter 10, (32)]{Strang:2008:AFEM}.
However, in our setting $f$ and $\hat{f}$ are encoded over \emph{different} triangulations; they are members of different function spaces.
Our key observation is that the common subdivision $S$ (\secref{CommonSubdivision}) provides exactly the structure needed to evaluate $\|f - \hat{f}\|^2_{L^2}$, as both functions are linear on each triangle of $S$.
In fact, we have
\begin{equation}
  \label{eq:l2solve}
  \|f - \hat{f}\|^2_{L^2} = (P_1 f  - P_0 \hat{f})^{T} \Msf_S (P_1 f - P_0 \hat{f}),
\end{equation}
where now $\Msf_S$ is the Galerkin mass matrix of the common subdivision, and $P_0,P_1$ are interpolation matrices which map piecewise-linear functions on  original and intrinsic triangulations to piecewise-linear functions on \(S\), \resp{}
In particular, $P_0$ is a $|\V0| \times |\V{S}|$ matrix, where each row corresponds to a vertex of $S$, and has that vertex's barycentric coordinates on $\T0$ as entries. 
$P_1$ is defined likewise for $\T1$.
We then find the function $\hat f$ which minimizes \eqref{l2solve} as the solution to a linear least-squares system, which can be prefactored if desired to efficiently transfer many functions.

We can leverage this formulation to transfer functions from any intrinsic triangulation back to the original mesh.
In \figref{AttributeTransferAccuracy}, we show how this transfer indeed improves the accuracy of PDE solutions as measured on the original low-quality mesh.
This machinery is enabled because our integer coordinates efficiently and robustly compute the common subdivision.
More broadly, this paradigm opens the door to a wide variety of future finite-element formulations involving intrinsic triangulations.

\subsection{Flip-Based Geodesic Paths}
\label{sec:FlipBasedGeodesicPaths}

\begin{figure}
\begin{center}
\end{center}
    \includegraphics{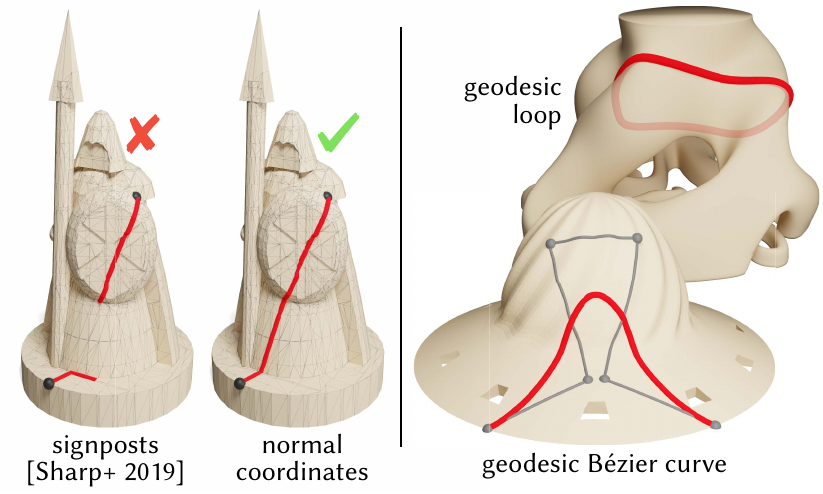}
    \caption{
      We construct geodesic paths by flipping edges in a normal coordinate intrinsic triangulation, as in \cite{Sharp:2020:YCF}.
      Normal coordinates guarantee a valid path, even on degenerate inputs (\figloc{left}).
      This unlocks advanced applications of geodesics in normal coordinates with the same guarantees, such as geodesic loops and B\'{e}zier curves (\figloc{right}).
      \label{fig:FlipGeodesics}
    }
\end{figure}

The previous sections have demonstrated the value of intrinsic triangulations as a high-quality basis for discretizing functions on surfaces; more broadly, these triangulations also provide simple and robust solutions to other tasks across geometry processing.
As an example, the recent \Proc{FlipOut} procedure of \citet{Sharp:2020:YCF} computes exact geodesic paths on surfaces via a simple intrinsic edge flipping strategy, introducing the geodesic as a path of edges in the triangulation.
This method is easily implemented in our integer representation in terms of the mesh operations in \secref{AlgorithmsAndDataStructures}, and the resulting geodesic paths may then be recovered with the \Proc{ExtractEdge} subroutine.
Computing geodesics with our robust integer coordinates is particularly appealing, because geodesic algorithms are notoriously difficult to implement robustly~\cite[Section 5.3]{Sharp:2020:YCF}.
Even the method of Sharp \etal{} uses the signpost data structure, which may fail to reconstruct a connected path along the surface for degenerate inputs.
In contrast, implementing \Proc{FlipOut} in our integer coordinate representation extends the benefits of our approach to this task, including a guarantee of valid connectivity in the output (\figref{FlipGeodesics}, \figloc{left}).
It also enables higher-level tasks involving geodesic paths to be safely run on low-quality input, such constructing geodesic loops on surfaces, and even geodesic B\'{e}zier curves, using a de Casteljau-style scheme due to \citet{Morera:2008:MTG} as shown in \figref{FlipGeodesics}, \figloc{right}.
 
\section{Evaluation}
\label{sec:Evaluation}

\begin{table}[]
\centering
\begin{tabular}{@{}rcr@{}}
\toprule
\textbf{Method}                         & \textbf{\shortstack{Intrinsic\\Delaunay\\Triangulation}}    & \textbf{\shortstack{Intrinsic\\Delaunay\\Refinement}} \\
\midrule
Explicit Overlay                        &  100 \%          & - \\
Signpost Tracing                        & 96.0 \%          & 69.1 \% \\ %
Integer Coordinates                     &  100 \%          & 100 \% \\
\bottomrule
\end{tabular}
\vspace{0.5em}
\caption{
  The success rate of our method and past approaches for building high-quality intrinsic triangulations in the Thingi10k dataset. For each we construct a Delaunay triangulation, either on the original vertex set or with Delaunay refinement to a $25^\circ$ minimum angle bound, and attempt to recover the connectivity of the common subdivision. The explicit overlay method does not support refinement.
  \label{tab:success_rates_thingi10k}
}
\vspace{-.5em}
\end{table}

We implemented all algorithms in C++; since basic vertex-face adjacency list cannot represent a general \(\Delta\)-complex (\secref{Connectivity}), we use a
halfedge data structure for triangle meshes. Timings are measured on a single core of an Intel i9-9980XE with 32 GB of RAM.

\paragraph{Performance}
Generally our data structure is quite fast, computing Delaunay refinements for complex meshes in seconds. For example, computing the Delaunay refinement \figref{HeatMethod} takes 0.2s, and the Delaunay refinement in \figref{LogMapVectorField} (\figloc{top}) takes 0.6s. Because we lazily recover intersection geometry from our integer coordinates when inserting vertices, routines such as Delaunay refinement which perform many insertions may become moderately expensive on large near-degenerate inputs. For instance we take 4 minutes to perform Delaunay refinement on 719791 (\figref{BadMesh}, \figloc{top}) which signposts does in 1.5 minutes, but on such meshes signposts generally fails to compute a valid common subdivision at the end. \secref{FutureWork} discusses hybrid routines which may give the best of both worlds.

\subsection{Robustness}
\label{sec:Robustness}

\begin{figure}
\begin{center}
\end{center}
    \includegraphics{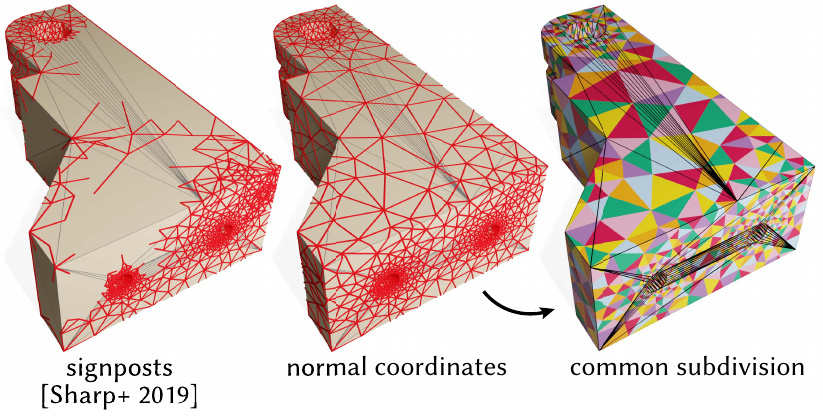}
    \caption{
      Past methods extracted edges by tracing ``signposts'' along the mesh, which may fail in the presence of degenerate triangles.
      In contrast, our integer coordinates always yield a topologically-valid common subdivision, even on extremely poor quality inputs.
      \label{fig:SignpostFail}
    }
\end{figure}

We validate robustness by successfully computing Delaunay triangulations, refinements, and their common subdivisions on all manifold meshes in Thingi10k \cite{Zhou:2016:TTK}. In particular, we used \emph{MeshLab} to convert each mesh to the PLY file format \cite{Cignoni:2008:ML}, resulting in 7696 valid manifold meshes. We begin by mollifying each mesh to a tolerance of \(10^{-5}\) (\secref{RobustImplementation}). For each model we compute the intrinsic Delaunay triangulation (\secref{IntrinsicDelaunay}) with a tolerance of \(10^{-5}\), as well as an intrinsic Delaunay refinement (\secref{DelaunayRefinement}) with a \(25^\circ\) angle bound. We verify that the algorithms terminate with the expected conditions. Additionally, we successfully extract an explicit mesh of the common subdivision in both cases, except for 1 model in the case of refinement whose common subdivision contains around 30 million vertices (\figref{BadMesh}, \figloc{top}).

We compare against the explicit overlay representation of \citet{Fisher:2006:IDT} and the signpost representation of \citet{Sharp:2019:NIT} (\tabref{success_rates_thingi10k}). The overlay representation similarly offers a guarantee of valid connectivity, but does not provide a constant-time edge flip operation (like normal coordinates do).  More importantly it does not support operations beyond edge flips and thus cannot perform Delaunay refinement. Signposts support a wide range of operations, but may not successfully recover the common subdivision on degenerate inputs (\figref{SignpostFail}). The statistic reported here differs from the result in \citet{Sharp:2019:NIT}, because no preprocessing of meshes is performed. For refinement \citet{Sharp:2019:NIT} do not treat the boundary case, so we evaluate only on models without boundary.

\begin{figure}
  \includegraphics{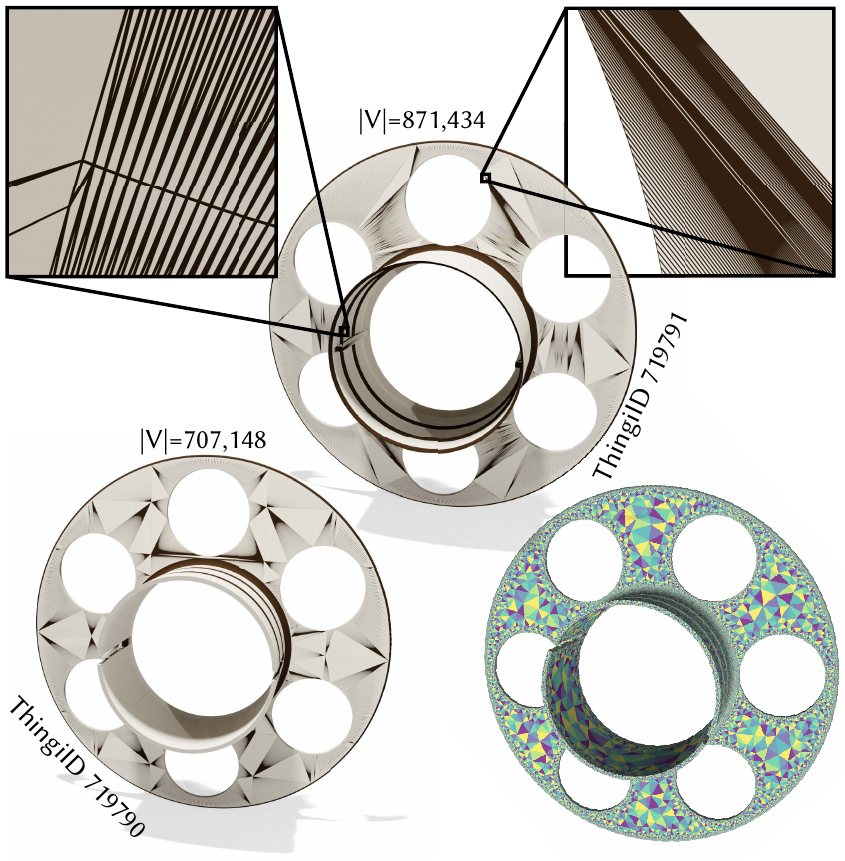}
  \caption{
    We fail to compute an explicit mesh of the common subdivision following Delaunay refinement on one Thingi10k model (\figloc{top}). Its common subdivision would contain 34 million vertices and our program runs out of memory.
We succeed on a nearly identical model (\figloc{bottom}), whose common subdivision contains merely 27 million vertices.
    \label{fig:BadMesh}
  }
\end{figure}

\section{Limitations and Future Work}
\label{sec:FutureWork}

\paragraph{Limitations}
The common subdivisions that we compute after Delaunay refinement can be quite large: the mean increase in \(|V|\) is \(20x\), and the \(95^{\text{th}}\) percentile increase is \(45x\). We emphasize again that the common subdivision is not generally a high-quality mesh anyway: one should perform numerical computations on the intrinsic triangulation instead. The intrinsic mesh has much higher element quality and is generally much smaller with a mean increase in \(|V|\) of \(3.7x\) and \(95^{\text{th}}\) percentile increase of \(7.8x\). However, for applications that rely on the common subdivision (\eg{ \secref{AttributeTransfer}}), it would still be beneficial to explore strategies for simplifying \(S\).

A related issue is that Delaunay refinement sometimes generates meshes with many small triangles. One can prove that Delaunay refinement in the plane produces \emph{well-graded} meshes, meaning essentially that it only places small triangles in regions with small features, and our Delaunay refinement on surfaces seems to behave similarly. Nonetheless, on poorly-conditioned input meshes, Delaunay refinement can insert many small triangles. This can cause problems for diffusion-based algorithms (\eg{} the logarithmic map computation in \figref{LogMapVectorField}), which use the mean edge length to determine a suitable diffusion time. We found that computing the diffusion time on the original mesh and then performing diffusion on the intrinsic triangulation produced the best results.

\paragraph{Hybrid Data Structures}
At this point, there are several intrinsic triangulation data structures, but no single one is perfect:
\begin{itemize}
   \item Overlay (explicit) provides exact connectivity; flipping can be slow; no vertex insertion.
   \item Signposts (implicit) provide inexact connectivity; flipping and vertex insertion are both fast.
   \item Integer coordinates (implicit) provide exact connectivity; flipping is fast; vertex insertion can be slow.
\end{itemize}

We propose a good way to get the best of all worlds would be to use a hybrid signpost + integer coordinate data structure.  This is fully implicit, so you don't pay the \(O(n^2)\) cost when you have \(O(n)\) edges crossing \(O(n)\) edges.  But, flipping and insertion are both fast, and connectivity is exact, if you accept the inserted locations.

Even further in the implicit direction, storing edge lengths is an ``optimization'' in our data structure.
One could \emph{just} store the normal coordinates and original triangulation, recovering edge length whenever necessary via a layout operation.
This is appealing, since it is truly an integer-only representation for intrinsic triangulations.

\paragraph{General geodesic curves}
It would also be natural to use this machinery as a representation for general geodesic curves on surfaces, which commonly arise in geometry processing tasks such as cutting, segmentation, \etc{}.

\section{Acknowledgments}
This work was supported by a Packard Fellowship, NSF
Award 1717320, DFG TRR 109, an NSF Graduate Research Fellowship, and gifts from Autodesk, Adobe, and Facebook.

\bibliographystyle{ACM-Reference-Format}
\bibliography{NormalCoordinates}

%%% -*-BibTeX-*-
%%% Do NOT edit. File created by BibTeX with style
%%% ACM-Reference-Format-Journals [18-Jan-2012].

\begin{thebibliography}{35}

%%% ====================================================================
%%% NOTE TO THE USER: you can override these defaults by providing
%%% customized versions of any of these macros before the \bibliography
%%% command.  Each of them MUST provide its own final punctuation,
%%% except for \shownote{}, \showDOI{}, and \showURL{}.  The latter two
%%% do not use final punctuation, in order to avoid confusing it with
%%% the Web address.
%%%
%%% To suppress output of a particular field, define its macro to expand
%%% to an empty string, or better, \unskip, like this:
%%%
%%% \newcommand{\showDOI}[1]{\unskip}   % LaTeX syntax
%%%
%%% \def \showDOI #1{\unskip}           % plain TeX syntax
%%%
%%% ====================================================================

\ifx \showCODEN    \undefined \def \showCODEN     #1{\unskip}     \fi
\ifx \showDOI      \undefined \def \showDOI       #1{#1}\fi
\ifx \showISBNx    \undefined \def \showISBNx     #1{\unskip}     \fi
\ifx \showISBNxiii \undefined \def \showISBNxiii  #1{\unskip}     \fi
\ifx \showISSN     \undefined \def \showISSN      #1{\unskip}     \fi
\ifx \showLCCN     \undefined \def \showLCCN      #1{\unskip}     \fi
\ifx \shownote     \undefined \def \shownote      #1{#1}          \fi
\ifx \showarticletitle \undefined \def \showarticletitle #1{#1}   \fi
\ifx \showURL      \undefined \def \showURL       {\relax}        \fi
% The following commands are used for tagged output and should be
% invisible to TeX
\providecommand\bibfield[2]{#2}
\providecommand\bibinfo[2]{#2}
\providecommand\natexlab[1]{#1}
\providecommand\showeprint[2][]{arXiv:#2}

\bibitem[\protect\citeauthoryear{Bell}{Bell}{2018}]%
        {Bell:2018:F}
\bibfield{author}{\bibinfo{person}{Mark Bell}.}
  \bibinfo{year}{2013--2018}\natexlab{}.
\newblock \bibinfo{title}{flipper (Computer Software)}.
\newblock \bibinfo{howpublished}{\url{pypi.python.org/pypi/flipper}}.
\newblock


\bibitem[\protect\citeauthoryear{Bell}{Bell}{2015}]%
        {Bell:2015:RMC}
\bibfield{author}{\bibinfo{person}{Mark Bell}.}
  \bibinfo{year}{2015}\natexlab{}.
\newblock \emph{\bibinfo{title}{Recognising mapping classes}}.
\newblock \bibinfo{thesistype}{Ph.D. Dissertation}. \bibinfo{school}{University
  of Warwick}.
\newblock


\bibitem[\protect\citeauthoryear{Bobenko and Springborn}{Bobenko and
  Springborn}{2007}]%
        {Bobenko:2007:ADL}
\bibfield{author}{\bibinfo{person}{Alexander~I Bobenko} {and}
  \bibinfo{person}{Boris~A Springborn}.} \bibinfo{year}{2007}\natexlab{}.
\newblock \showarticletitle{A discrete Laplace--Beltrami operator for
  simplicial surfaces}.
\newblock \bibinfo{journal}{\emph{Discrete \& Computational Geometry}}
  \bibinfo{volume}{38}, \bibinfo{number}{4} (\bibinfo{year}{2007}),
  \bibinfo{pages}{740--756}.
\newblock


\bibitem[\protect\citeauthoryear{Chew}{Chew}{1993}]%
        {Chew:1993:GQMG}
\bibfield{author}{\bibinfo{person}{L~Paul Chew}.}
  \bibinfo{year}{1993}\natexlab{}.
\newblock \showarticletitle{Guaranteed-quality mesh generation for curved
  surfaces}. In \bibinfo{booktitle}{\emph{Proceedings of the ninth annual
  symposium on Computational geometry}}. \bibinfo{pages}{274--280}.
\newblock


\bibitem[\protect\citeauthoryear{Cignoni, Callieri, Corsini, Dellepiane,
  Ganovelli, and Ranzuglia}{Cignoni et~al\mbox{.}}{2008}]%
        {Cignoni:2008:ML}
\bibfield{author}{\bibinfo{person}{Paolo Cignoni}, \bibinfo{person}{Marco
  Callieri}, \bibinfo{person}{Massimiliano Corsini}, \bibinfo{person}{Matteo
  Dellepiane}, \bibinfo{person}{Fabio Ganovelli}, {and} \bibinfo{person}{Guido
  Ranzuglia}.} \bibinfo{year}{2008}\natexlab{}.
\newblock \showarticletitle{Meshlab: an open-source mesh processing tool.}. In
  \bibinfo{booktitle}{\emph{Eurographics Italian chapter conference}},
  Vol.~\bibinfo{volume}{2008}. Salerno, Italy, \bibinfo{pages}{129--136}.
\newblock


\bibitem[\protect\citeauthoryear{Crane, Weischedel, and Wardetzky}{Crane
  et~al\mbox{.}}{2017}]%
        {Crane:2017:HMD}
\bibfield{author}{\bibinfo{person}{Keenan Crane}, \bibinfo{person}{Clarisse
  Weischedel}, {and} \bibinfo{person}{Max Wardetzky}.}
  \bibinfo{year}{2017}\natexlab{}.
\newblock \showarticletitle{The Heat Method for Distance Computation}.
\newblock \bibinfo{journal}{\emph{Commun. ACM}} \bibinfo{volume}{60},
  \bibinfo{number}{11} (\bibinfo{date}{Oct.} \bibinfo{year}{2017}),
  \bibinfo{pages}{90--99}.
\newblock
\showISSN{0001-0782}


\bibitem[\protect\citeauthoryear{Devillers and Pion}{Devillers and
  Pion}{2003}]%
        {Devillers:2003:EEG}
\bibfield{author}{\bibinfo{person}{Olivier Devillers} {and}
  \bibinfo{person}{Sylvain Pion}.} \bibinfo{year}{2003}\natexlab{}.
\newblock \showarticletitle{Efficient Exact Geometric Predicates for Delauny
  Triangulations.}. In \bibinfo{booktitle}{\emph{Proc. 5th Workshop Algorithm
  Eng. Exper.}} \bibinfo{pages}{37--44}.
\newblock


\bibitem[\protect\citeauthoryear{Dynnikov}{Dynnikov}{2020}]%
        {Dynnikov:2020:CINC}
\bibfield{author}{\bibinfo{person}{Ivan Dynnikov}.}
  \bibinfo{year}{2020}\natexlab{}.
\newblock \showarticletitle{Counting intersections of normal curves}.
\newblock \bibinfo{journal}{\emph{arXiv preprint arXiv:2010.01638}}
  (\bibinfo{year}{2020}).
\newblock


\bibitem[\protect\citeauthoryear{Erickson and Nayyeri}{Erickson and
  Nayyeri}{2013}]%
        {Erickson:2013:TCC}
\bibfield{author}{\bibinfo{person}{Jeff Erickson} {and} \bibinfo{person}{Amir
  Nayyeri}.} \bibinfo{year}{2013}\natexlab{}.
\newblock \showarticletitle{Tracing compressed curves in triangulated
  surfaces}.
\newblock \bibinfo{journal}{\emph{Discrete \& Computational Geometry}}
  \bibinfo{volume}{49}, \bibinfo{number}{4} (\bibinfo{year}{2013}),
  \bibinfo{pages}{823--863}.
\newblock


\bibitem[\protect\citeauthoryear{Farb and Margalit}{Farb and Margalit}{2011}]%
        {Farb:2011:APM}
\bibfield{author}{\bibinfo{person}{Benson Farb} {and} \bibinfo{person}{Dan
  Margalit}.} \bibinfo{year}{2011}\natexlab{}.
\newblock \bibinfo{booktitle}{\emph{A primer on mapping class groups
  (pms-49)}}.
\newblock \bibinfo{publisher}{Princeton University Press}.
\newblock


\bibitem[\protect\citeauthoryear{Fisher, Springborn, Bobenko, and
  Schroder}{Fisher et~al\mbox{.}}{2006}]%
        {Fisher:2006:IDT}
\bibfield{author}{\bibinfo{person}{Matthew Fisher}, \bibinfo{person}{Boris
  Springborn}, \bibinfo{person}{Alexander~I Bobenko}, {and}
  \bibinfo{person}{Peter Schroder}.} \bibinfo{year}{2006}\natexlab{}.
\newblock \showarticletitle{An algorithm for the construction of intrinsic
  Delaunay triangulations with applications to digital geometry processing}.
\newblock In \bibinfo{booktitle}{\emph{ACM SIGGRAPH 2006}}.
  \bibinfo{pages}{69--74}.
\newblock


\bibitem[\protect\citeauthoryear{Gillespie, Springborn, and Crane}{Gillespie
  et~al\mbox{.}}{2021}]%
        {Gillespie:2021:DCE}
\bibfield{author}{\bibinfo{person}{Mark Gillespie}, \bibinfo{person}{Boris
  Springborn}, {and} \bibinfo{person}{Keenan Crane}.}
  \bibinfo{year}{2021}\natexlab{}.
\newblock \showarticletitle{Discrete Conformal Equivalence of Polyhedral
  Surfaces}.
\newblock \bibinfo{journal}{\emph{ACM Trans. Graph.}} \bibinfo{volume}{40},
  \bibinfo{number}{4} (\bibinfo{year}{2021}).
\newblock


\bibitem[\protect\citeauthoryear{Haken}{Haken}{1961}]%
        {Haken:1961:TNIK}
\bibfield{author}{\bibinfo{person}{W. Haken}.} \bibinfo{year}{1961}\natexlab{}.
\newblock \showarticletitle{Theorie Der Normalfl\"achen: Ein Isotopiekriterium
  F\"ur Den Kreisknoten}.
\newblock \bibinfo{journal}{\emph{Acta Math.}} \bibinfo{volume}{105},
  \bibinfo{number}{3-4} (\bibinfo{year}{1961}).
\newblock


\bibitem[\protect\citeauthoryear{Hass and Trnkova}{Hass and Trnkova}{2020}]%
        {Hass:2020:AIG}
\bibfield{author}{\bibinfo{person}{Joel Hass} {and} \bibinfo{person}{Maria
  Trnkova}.} \bibinfo{year}{2020}\natexlab{}.
\newblock \showarticletitle{{Approximating Isosurfaces by Guaranteed-quality
  Triangular Meshes}}.
\newblock \bibinfo{journal}{\emph{Computer Graphics Forum}}
  (\bibinfo{year}{2020}).
\newblock
\showISSN{1467-8659}


\bibitem[\protect\citeauthoryear{Hatcher}{Hatcher}{2002}]%
        {Hatcher:2002:AT}
\bibfield{author}{\bibinfo{person}{Allen Hatcher}.}
  \bibinfo{year}{2002}\natexlab{}.
\newblock \bibinfo{booktitle}{\emph{Algebraic Topology}}.
\newblock \bibinfo{publisher}{Cambridge University Press}.
\newblock


\bibitem[\protect\citeauthoryear{Hu, Zhou, Gao, Jacobson, Zorin, and
  Panozzo}{Hu et~al\mbox{.}}{2018}]%
        {Hu:2018:TMW}
\bibfield{author}{\bibinfo{person}{Yixin Hu}, \bibinfo{person}{Qingnan Zhou},
  \bibinfo{person}{Xifeng Gao}, \bibinfo{person}{Alec Jacobson},
  \bibinfo{person}{Denis Zorin}, {and} \bibinfo{person}{Daniele Panozzo}.}
  \bibinfo{year}{2018}\natexlab{}.
\newblock \showarticletitle{Tetrahedral meshing in the wild.}
\newblock \bibinfo{journal}{\emph{ACM Trans. Graph.}} \bibinfo{volume}{37},
  \bibinfo{number}{4} (\bibinfo{year}{2018}), \bibinfo{pages}{60--1}.
\newblock


\bibitem[\protect\citeauthoryear{Indermitte, Liebling, Troyanov, and
  Cl{\'e}men{\c{c}}on}{Indermitte et~al\mbox{.}}{2001}]%
        {Indermitte:2001:VDP}
\bibfield{author}{\bibinfo{person}{Claude Indermitte}, \bibinfo{person}{Th~M
  Liebling}, \bibinfo{person}{Marc Troyanov}, {and} \bibinfo{person}{Heinz
  Cl{\'e}men{\c{c}}on}.} \bibinfo{year}{2001}\natexlab{}.
\newblock \showarticletitle{Voronoi diagrams on piecewise flat surfaces and an
  application to biological growth}.
\newblock \bibinfo{journal}{\emph{Theoretical Computer Science}}
  \bibinfo{volume}{263}, \bibinfo{number}{1-2} (\bibinfo{year}{2001}),
  \bibinfo{pages}{263--274}.
\newblock


\bibitem[\protect\citeauthoryear{Kneser}{Kneser}{1929}]%
        {Kneser:1929:GFDM}
\bibfield{author}{\bibinfo{person}{H. Kneser}.}
  \bibinfo{year}{1929}\natexlab{}.
\newblock \showarticletitle{Geschlossene Fl\"achen in Dreidimensionalen
  Mannigfaltigkeiten.}
\newblock \bibinfo{journal}{\emph{Jahresber. Dtsch. Math.-Ver.}}
  \bibinfo{volume}{38} (\bibinfo{year}{1929}).
\newblock


\bibitem[\protect\citeauthoryear{Kn\"{o}ppel, Crane, Pinkall, and
  Schr\"{o}der}{Kn\"{o}ppel et~al\mbox{.}}{2013}]%
        {Knoppel:2013:GOD}
\bibfield{author}{\bibinfo{person}{Felix Kn\"{o}ppel}, \bibinfo{person}{Keenan
  Crane}, \bibinfo{person}{Ulrich Pinkall}, {and} \bibinfo{person}{Peter
  Schr\"{o}der}.} \bibinfo{year}{2013}\natexlab{}.
\newblock \showarticletitle{Globally optimal direction fields}.
\newblock \bibinfo{journal}{\emph{ACM Trans. Graph.}} \bibinfo{volume}{32},
  \bibinfo{number}{4} (\bibinfo{year}{2013}).
\newblock


\bibitem[\protect\citeauthoryear{Morera, Carvalho, and Velho}{Morera
  et~al\mbox{.}}{2008}]%
        {Morera:2008:MTG}
\bibfield{author}{\bibinfo{person}{D. Morera}, \bibinfo{person}{P. Carvalho},
  {and} \bibinfo{person}{L. Velho}.} \bibinfo{year}{2008}\natexlab{}.
\newblock \showarticletitle{{Modeling on Triangulations with Geodesic Curves}}.
\newblock \bibinfo{journal}{\emph{The Visual Computer}} \bibinfo{volume}{24},
  \bibinfo{number}{12} (\bibinfo{date}{Dec} \bibinfo{year}{2008}).
\newblock


\bibitem[\protect\citeauthoryear{Rand}{Rand}{2011}]%
        {Rand:2011:WWC}
\bibfield{author}{\bibinfo{person}{Alexander Rand}.}
  \bibinfo{year}{2011}\natexlab{}.
\newblock \showarticletitle{Where and How Chew's Second Delaunay Refinement
  Algorithm Works.}. In \bibinfo{booktitle}{\emph{CCCG}}.
\newblock


\bibitem[\protect\citeauthoryear{Sawhney and Crane}{Sawhney and Crane}{2020}]%
        {Sawhney:2020:MCG}
\bibfield{author}{\bibinfo{person}{Rohan Sawhney} {and} \bibinfo{person}{Keenan
  Crane}.} \bibinfo{year}{2020}\natexlab{}.
\newblock \showarticletitle{Monte Carlo Geometry Processing: A Grid-Free
  Approach to PDE-Based Methods on Volumetric Domains}.
\newblock \bibinfo{journal}{\emph{ACM Trans. Graph.}} \bibinfo{volume}{39},
  \bibinfo{number}{4} (\bibinfo{year}{2020}).
\newblock


\bibitem[\protect\citeauthoryear{Schaefer, Sedgwick, and
  {\v{S}}tefankovi{\v{c}}}{Schaefer et~al\mbox{.}}{2002}]%
        {Schaefer:2002:ANC}
\bibfield{author}{\bibinfo{person}{Marcus Schaefer}, \bibinfo{person}{Eric
  Sedgwick}, {and} \bibinfo{person}{Daniel {\v{S}}tefankovi{\v{c}}}.}
  \bibinfo{year}{2002}\natexlab{}.
\newblock \showarticletitle{Algorithms for normal curves and surfaces}. In
  \bibinfo{booktitle}{\emph{International Computing and Combinatorics
  Conference}}. Springer, \bibinfo{pages}{370--380}.
\newblock


\bibitem[\protect\citeauthoryear{Schaefer, Sedgwick, and Stefankovic}{Schaefer
  et~al\mbox{.}}{2008}]%
        {Schaefer:2008:CDT}
\bibfield{author}{\bibinfo{person}{Marcus Schaefer}, \bibinfo{person}{Eric
  Sedgwick}, {and} \bibinfo{person}{Daniel Stefankovic}.}
  \bibinfo{year}{2008}\natexlab{}.
\newblock \showarticletitle{Computing Dehn Twists and Geometric Intersection
  Numbers in Polynomial Time.}. In \bibinfo{booktitle}{\emph{CCCG}},
  Vol.~\bibinfo{volume}{20}. \bibinfo{pages}{111--114}.
\newblock


\bibitem[\protect\citeauthoryear{Schindler and Chen}{Schindler and
  Chen}{2012}]%
        {Schindler:2012:Barycentric}
\bibfield{author}{\bibinfo{person}{Max Schindler} {and} \bibinfo{person}{Evan
  Chen}.} \bibinfo{year}{2012}\natexlab{}.
\newblock \showarticletitle{Barycentric Coordinates in Olympiad Geometry}.
\newblock \bibinfo{journal}{\emph{Olympiad Articles}} (\bibinfo{year}{2012}),
  \bibinfo{pages}{1--40}.
\newblock


\bibitem[\protect\citeauthoryear{Sell{\'a}n, Cheng, Ma, Dembowski, and
  Jacobson}{Sell{\'a}n et~al\mbox{.}}{2019}]%
        {Sellan:2019:SGP}
\bibfield{author}{\bibinfo{person}{Silvia Sell{\'a}n},
  \bibinfo{person}{Herng~Yi Cheng}, \bibinfo{person}{Yuming Ma},
  \bibinfo{person}{Mitchell Dembowski}, {and} \bibinfo{person}{Alec Jacobson}.}
  \bibinfo{year}{2019}\natexlab{}.
\newblock \showarticletitle{Solid geometry processing on deconstructed
  domains}. In \bibinfo{booktitle}{\emph{Computer Graphics Forum}},
  Vol.~\bibinfo{volume}{38}. Wiley Online Library, \bibinfo{pages}{564--579}.
\newblock


\bibitem[\protect\citeauthoryear{Sharp and Crane}{Sharp and Crane}{2020a}]%
        {Sharp:2020:LNT}
\bibfield{author}{\bibinfo{person}{Nicholas Sharp} {and}
  \bibinfo{person}{Keenan Crane}.} \bibinfo{year}{2020}\natexlab{a}.
\newblock \showarticletitle{{A Laplacian for Nonmanifold Triangle Meshes}}.
\newblock \bibinfo{journal}{\emph{Computer Graphics Forum (SGP)}}
  \bibinfo{volume}{39}, \bibinfo{number}{5} (\bibinfo{year}{2020}).
\newblock


\bibitem[\protect\citeauthoryear{Sharp and Crane}{Sharp and Crane}{2020b}]%
        {Sharp:2020:YCF}
\bibfield{author}{\bibinfo{person}{Nicholas Sharp} {and}
  \bibinfo{person}{Keenan Crane}.} \bibinfo{year}{2020}\natexlab{b}.
\newblock \showarticletitle{You can find geodesic paths in triangle meshes by
  just flipping edges}.
\newblock \bibinfo{journal}{\emph{ACM Trans. on Graphics (TOG)}}
  \bibinfo{volume}{39}, \bibinfo{number}{6} (\bibinfo{year}{2020}),
  \bibinfo{pages}{1--15}.
\newblock


\bibitem[\protect\citeauthoryear{Sharp, Soliman, and Crane}{Sharp
  et~al\mbox{.}}{2019a}]%
        {Sharp:2019:NIT}
\bibfield{author}{\bibinfo{person}{Nicholas Sharp}, \bibinfo{person}{Yousuf
  Soliman}, {and} \bibinfo{person}{Keenan Crane}.}
  \bibinfo{year}{2019}\natexlab{a}.
\newblock \showarticletitle{Navigating intrinsic triangulations}.
\newblock \bibinfo{journal}{\emph{ACM Trans. on Graphics (TOG)}}
  \bibinfo{volume}{38}, \bibinfo{number}{4} (\bibinfo{year}{2019}),
  \bibinfo{pages}{1--16}.
\newblock


\bibitem[\protect\citeauthoryear{Sharp, Soliman, and Crane}{Sharp
  et~al\mbox{.}}{2019b}]%
        {Sharp:2019:VHM}
\bibfield{author}{\bibinfo{person}{Nicholas Sharp}, \bibinfo{person}{Yousuf
  Soliman}, {and} \bibinfo{person}{Keenan Crane}.}
  \bibinfo{year}{2019}\natexlab{b}.
\newblock \showarticletitle{The Vector Heat Method}.
\newblock \bibinfo{journal}{\emph{ACM Trans. Graph.}} \bibinfo{volume}{38},
  \bibinfo{number}{3} (\bibinfo{year}{2019}).
\newblock


\bibitem[\protect\citeauthoryear{Shewchuk}{Shewchuk}{1997}]%
        {Shewchuk:1997:DRMG}
\bibfield{author}{\bibinfo{person}{Jonathan~R Shewchuk}.}
  \bibinfo{year}{1997}\natexlab{}.
\newblock \emph{\bibinfo{title}{Delaunay refinement mesh generation}}.
\newblock \bibinfo{thesistype}{Ph.D. Dissertation}.
  \bibinfo{school}{Carnegie-Mellon Univ School of Computer Science}.
\newblock


\bibitem[\protect\citeauthoryear{Strang and Fix}{Strang and Fix}{2008}]%
        {Strang:2008:AFEM}
\bibfield{author}{\bibinfo{person}{Gilbert Strang} {and}
  \bibinfo{person}{George~J Fix}.} \bibinfo{year}{2008}\natexlab{}.
\newblock \showarticletitle{An analysis of the finite element method (2 ed.)}.
\newblock   \bibinfo{volume}{212} (\bibinfo{year}{2008}).
\newblock


\bibitem[\protect\citeauthoryear{Xia}{Xia}{2013}]%
        {Xia:2013:TSF}
\bibfield{author}{\bibinfo{person}{Ge Xia}.} \bibinfo{year}{2013}\natexlab{}.
\newblock \showarticletitle{The stretch factor of the Delaunay triangulation is
  less than 1.998}.
\newblock \bibinfo{journal}{\emph{SIAM J. Comput.}} \bibinfo{volume}{42},
  \bibinfo{number}{4} (\bibinfo{year}{2013}), \bibinfo{pages}{1620--1659}.
\newblock


\bibitem[\protect\citeauthoryear{Zhou, Grinspun, Zorin, and Jacobson}{Zhou
  et~al\mbox{.}}{2016}]%
        {Zhou:2016:MAS}
\bibfield{author}{\bibinfo{person}{Qingnan Zhou}, \bibinfo{person}{Eitan
  Grinspun}, \bibinfo{person}{Denis Zorin}, {and} \bibinfo{person}{Alec
  Jacobson}.} \bibinfo{year}{2016}\natexlab{}.
\newblock \showarticletitle{Mesh arrangements for solid geometry}.
\newblock \bibinfo{journal}{\emph{ACM Transactions on Graphics (TOG)}}
  \bibinfo{volume}{35}, \bibinfo{number}{4} (\bibinfo{year}{2016}),
  \bibinfo{pages}{1--15}.
\newblock


\bibitem[\protect\citeauthoryear{Zhou and Jacobson}{Zhou and Jacobson}{2016}]%
        {Zhou:2016:TTK}
\bibfield{author}{\bibinfo{person}{Qingnan Zhou} {and} \bibinfo{person}{Alec
  Jacobson}.} \bibinfo{year}{2016}\natexlab{}.
\newblock \showarticletitle{Thingi10K: A Dataset of 10,000 3D-Printing Models}.
\newblock \bibinfo{journal}{\emph{arXiv preprint arXiv:1605.04797}}
  (\bibinfo{year}{2016}).
\newblock


\end{thebibliography}

\appendix

\section{Pseudocode}
\label{app:Pseudocode}

We assume all algorithms have access to triangulations \(\T0\) and \(\T1\), their edge lengths \(\Tl0, \Tl1\), the normal coordinates \(n\), and the roundabouts \(r\).
\newpage
\begin{figure}
  \includegraphics{images/TraceFromCases.pdf}
  \caption{A curve entering triangle \(jik\) along edge \(ij\) can proceed in 3 ways: it can exit along edge \(ik\), in which case it is counted by \(c_i^{kj}\) (\figloc{left}); it can exit along edge \(kj\), in which case it is counted by \(c_j^{ik}\) (\figloc{center}); or it can terminate at vertex \(k\)(\figloc{right}). This forms the core of algorithm \Proc{TraceFrom} [Reproduced from \figref{TraceFromCases} for convenience]. \label{fig:TraceFromCasesAppendix}}
\end{figure}
\begin{algo}{\Proc{TraceFrom}$(\zeta)$}
  \label{alg:TraceFrom}
  \begin{algorithmic}[1]
    \InputConditions{Any combinatorial crossing $\zeta = (\halfedge{ij}, p)$ along some curve \(\gamma\) lying along \(\T1\).}
    \OutputConditions{The half of the curve $\gamma$ as a sequence of points $(\zeta_0, \zeta_1,\dots, \zeta_n, k)$ along $M$, where \(\zeta = \zeta_0\) and \(k\) is the vertex at which \(\gamma\) terminates.}
    \State {\(\mathsf{currentHalfedge} \gets \halfedge{ij}\)}
    \State {\(\gamma \gets [(\mathsf{currentHalfedge}, p)]\)}
    \While {\textsf{True}} \Comment{Walk until the curve terminates at a vertex}
    \State{\LeftComment{Let \(i\) and \(j\) refer the the tail and tip of the current halfedge}}
    \State {\(\halfedge{ij} \gets \mathsf{currentHalfedge}\)}
    \State {\(k \gets \Proc{OppositeVertex}(\Proc{Twin}(\mathsf{currentHalfedge}))\)}
    \If {\(p < c_i^{kj}\)}
    \Comment{Case 1 of \figref{TraceFromCasesAppendix} (\(\gamma\) goes right)}
        \State {\(\mathsf{currentHalfedge} \gets \halfedge{ik}\)} \Comment{Move to \(\halfedge{ik}\)}
        \State {\(p \gets p\)}
        \State {\(\Proc{Append}(\gamma, (\mathsf{currentHalfedge}, p))\)}
    \ElsIf {\(p \geq n_{ij} - c_j^{ik}\)} \Comment{Case 2 of \figref{TraceFromCasesAppendix} (\(\gamma\) goes left)}
      \State {\(\mathsf{currentHalfedge} \gets \halfedge{kj}\)} \Comment{Move to \(\halfedge{kj}\)}
      \State {\(p \gets n_{kj} + p - n_{ij}\)}
      \State {\(\Proc{Append}(\gamma, (\mathsf{currentHalfedge}, p))\)}
    \Else
      \Comment{Case 3 of \figref{TraceFromCasesAppendix} (\(\gamma\) ends at \(k\))}
      \State{\Return \((\gamma, k)\)}
    \EndIf
    \EndWhile
  \end{algorithmic}
\end{algo}

\begin{figure}
  \includegraphics{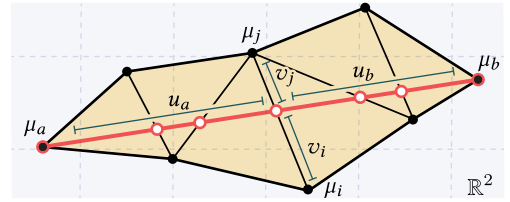}
  \caption{In procedure \Proc{ExtractCurve}, we compute barycentric coordinates by laying out a triangle strip in the plane [Reproduced from \figref{TriangleStripLayout} for convenience]. \label{fig:TriangleStripLayoutAppendix}}
\end{figure}
\begin{algo}{\Proc{ExtractCurve}$(\zeta)$}
  \label{alg:ExtractCurve}
  \begin{algorithmic}[1]
    \InputConditions{Any combinatorial crossing $\zeta = (\halfedge{ij}, p)$ along some curve \(\gamma\)}
    \OutputConditions{The entire trajectory of \(\gamma\) as a sequence of geometric crossings $(a, z_0, z_1,\dots, z_k, b)$ along $M$.}
    \If {runs along edge}
    \State {\Return $e$}
    \Else
    \State {\(\gamma_{\textsf{front}}, b \gets \Proc{TraceFrom}(\zeta)\)} \Comment{Trace forwards along \(\gamma\)}
    \State {\(\gamma_{\textsf{back}}, a \gets \Proc{Reverse}(\Proc{TraceFrom}(\overline{\zeta}))\)} \Comment{Trace backwards}
    \State {\(\gamma_{\textsf{Combinatorial}} \gets \Proc{Append}(\gamma_{\textsf{back}}, \gamma_{\textsf{front}})\)}
    \State{\LeftComment{Compute positions in \(\mathbb{R}^2\) for the triangle strip containing \(\gamma\)}}
    \State {\(\mu \gets \Proc{LayOutTriangleStrip}(\gamma_{\textsf{Combinatorial}})\)}
    \State {\(\gamma_{\textsf{Geometric}} \gets []\)}
    \For {\(\zeta = (\halfedge{ij}, p) \in \gamma_{\textsf{Combinatorial}}\)}
    \State {\LeftComment{Find the intersection of \(ab\) and \(ij\) in the plane (\figref{TriangleStripLayoutAppendix})}}
    \State {\(u, v \gets \Proc{IntersectionBarycentric}(\mu_a, \mu_b, \mu_i, \mu_j)\)}
    \State {\(z \gets (\halfedge{ij}, p, u, v)\)}
    \State {\Proc{Append}(\(\gamma_{\textsf{Geometric}}, z\))}
    \EndFor
    \State {\Return $(a, \gamma_{\textsf{Geometric}}, b), \gamma_{\textsf{Combinatorial}}$}
    \EndIf
  \end{algorithmic}
\end{algo}

\begin{algo}{\Proc{ExtractEdge}$(ab)$}
  \label{alg:ExtractEdge}
  \begin{algorithmic}[1]
    \InputConditions{An edge \(ab \in \E0\)}
    \OutputConditions{The entire trajectory of \(ab\) as a sequence of geometric crossings $(a, z_0, z_1,\dots, z_l, b)$ along $M$}
    \State{\(k \gets \) local index of \(\halfedge{ab}\) about vertex \(a\)}
    \Comment{Find preceding halfedge }
    \State{\(\halfedge{ai} \gets \text{argmax}_{\he{ai}} \{r_{\he{ai}}\;:\; r_{\he{ai}} \leq k\}\)} \Comment{Might wrap cyclically}
    \If {\(r_{\he{ai}} = k\) and \(n_{ai} = -1\)} \Comment{Shared edge}
    \State{\Return \(\halfedge{ai}\)}
    \Else
    \State{\(p \gets k - r_{\he{ai}}-1\)}
    \State{\(\gamma \gets \Proc{ExtractCurve}(\Proc{Next}(\halfedge{ai}), p)\)}
    \State{\LeftComment{If \(\gamma\) does not end at a vertex of \(\V0\) we must keep tracing}}
    \While {\(\gamma\) does not end at a vertex of \(\V0\)}
    \State{\(i \gets \) endpoint of \(\gamma\)}
    \State{\LeftComment{Our curves only pass through vertices inserted via edge splits. Hence, there is a unique other crossing \(\zeta\) emanating from \(j\) that we must trace along}}
    \State{\(\zeta \gets \) other crossing emanating from \(j\)}
    \State{\((i, z_1, \ldots, z_s, j) \gets \Proc{ExtractCurve}(\zeta)\)}
    \State{\LeftComment{\Proc{ExtractCurve} returns geometric crossings whose points have barycentric coordinates \(u\) computed relative to endpoints \(i\) and \(j\). We should return barycentric coordinates relative to \(a\) and \(b\) instead. Since \(\Tpos0i\), \(\Tpos0j\) give barycentric coordinates for \(i\) and \(j\) along \(ab\), this amounts to linear interpolation of those coordinates}}
    \State{\(\Proc{AdjustBarycentricCoordinates}(z_1, \ldots, z_s)\)}
    \State{\(\Proc{Append}(\gamma, (z_1, \ldots, z_s, j))\)}
    \EndWhile
    \State{\Return \(\gamma\)}
    \EndIf
  \end{algorithmic}
\end{algo}

\begin{figure}
  \includegraphics{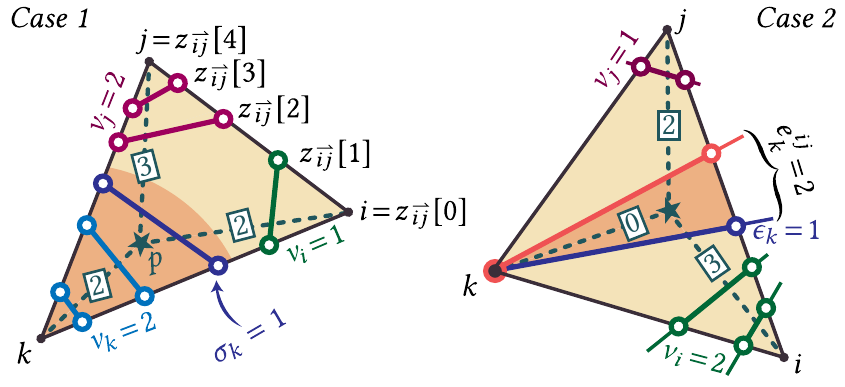}
  \caption{In \Proc{SplitFace}, we do a sequence of line-side tests to compute a value of \(\nu\) at each corner.\label{fig:FaceSplitAlgo}}
\end{figure}

\begin{algo}{\Proc{SplitFace\_Case1}$(ijk, u)$}
  \label{alg:SplitFace1}
  \begin{algorithmic}[1]
    \InputConditions{The location to insert a vertex on $\T1$, as barycentric coordinates $u$ in a face $ijk \in \F1$.}
    \OutputConditions{An updated integer coordinate intrinsic triangulation.}
    \State {\LeftComment{Gather all geometric crossings}}
    \For {\(\zeta = (\halfedge{ij}, p) \in \Proc{CombinatorialCrossings}(ijk)\)}
    \State {\(z_{\he{ij}}[p+1] \gets \Proc{ExtractGeometricCrossing}(\zeta)\)}
    \EndFor
    \State \LeftComment{Compute new normal coordinates}
    \For{\(\corner{i}{jk} \in \Proc{CornersOf}(ijk)\)} \Comment{for each corner}
    \State{\LeftComment{Identify which corner curves, if any, contain u}}
      \State {\(\mathcal{C}_i \gets \left\{\xi:  0 \leq \xi < c_i^{jk}, u \in \Proc{Triangle}\left(i, z_{\he{ij}}[\xi], z_{\he{ik}}[\xi]\right)\right\}\)}
      \State {$\nu_{i} \gets \min \big( c_i^{jk} \cup \mathcal{C}_i \big)$} \Comment{Take the closest such corner}
      \State {\(\sigma_{i} \gets c_i^{jk} - \nu_i\)} \Comment{``Slack'' left at corner}
    \EndFor

    \State \LeftComment{In exact arithmetic, only one \(\sigma_i\) may be nonzero. In floating point multiple could be nonzero, so we keep the biggest and round the others to zero}
    \If {\(\sigma_i \geq \sigma_j, \sigma_k\)}
        \State {\(\sigma_j, \sigma_k \gets 0, 0\)}
    \ElsIf {\(\sigma_j \geq \sigma_k, \sigma_i\)}
    \State {\(\sigma_k, \sigma_i \gets 0, 0\)}
    \ElsIf {\(\sigma_k \geq \sigma_i, \sigma_j\)}
    \State {\(\sigma_i, \sigma_j \gets 0, 0\)}
    \EndIf

    \For{\(\corner{i}{jk} \in \Proc{CornersOf}(ijk)\)} \Comment{include slack crossings}
      \State {$n_{pi} \gets \nu_{i} + \sigma_j + \sigma_k $} 
    \EndFor
      
    \State \LeftComment{Compute everything else}
    \State {$r \gets \Proc{UpdateRoundabouts}(n_{pi}, n_{pj}, n_{pk}, r)$} \Comment{\eqref{RoundaboutUpdate}}
    \State {$\Tl1_{pi}, \Tl1_{pj}, \Tl1_{pk} \gets \Proc{UpdateEdgeLengths}(\Tl1, u)$} \Comment{\eqref{NewEdgeLengths}} 
    \State {$R \gets \Proc{RegionFromNormalCoordinates}(n_{pi}, n_{pj}, n_{pk})$}
    \State {$\Tpos0p \gets \Proc{RecoverBarycentric}(R, Z, u)$}\Comment{\appref{RecoverBarycentric}} 
  \end{algorithmic}
\end{algo}

\begin{algo}{\Proc{SplitFace\_Case2}$(ijk, u)$}
  \label{alg:SplitFace2}
  \begin{algorithmic}[1]
    \InputConditions{The location to insert a vertex on $\T1$, as barycentric coordinates $u$ in a face $ijk \in \F1$. In Case 1, \(ijk\) must be oriented so that \(n_{ij} \geq n_{jk} + n_{ki}\)}
    \OutputConditions{An updated integer coordinate intrinsic triangulation.}
    \State {\LeftComment{Gather all geometric crossings}}
    \For {\(\zeta = (\halfedge{ij}, p) \in \Proc{CombinatorialCrossings}(ijk)\)}
    \State {\(z_{\he{ij}}[p+1] \gets \Proc{ExtractGeometricCrossing}(\zeta)\)}
    \EndFor
    \State \LeftComment{Compute new normal coordinates}
    \For{\(\corner{i}{jk} \in \Proc{CornersOf}(ijk)\)} \Comment{for each corner}
    \State{\LeftComment{Identify which corner curves, if any, contain u}}
      \State {$\mathcal{C}_i \gets \left\{\xi:  0 \leq \xi < c_i^{jk}, u \in \Proc{Triangle}\left(i, z_{\he{ij}}[\xi], z_{\he{ik}}[\xi]\right)\right\}\)}
      \State {$\nu_{i} \gets \min \big( c_i^{jk} \cup \mathcal{C}_i \big)$} \Comment{Take the closest such corner}
      \State {\(\sigma_{i} \gets c_i^{jk} - \nu_i\)} \Comment{``Slack'' left at corner}
      \EndFor
      \State{\LeftComment{Note that \(\nu_k = \sigma_k = 0\)}}

      \If {\(\sigma_i \geq \sigma_j\)} \Comment{Ensure that at most one \(\sigma\) is nonzero}
        \State{\(\sigma_j \gets 0\)} 
      \ElsIf {\(\sigma_j \geq \sigma_i\)}
        \State{\(\sigma_i \gets 0\)} 
      \EndIf

      \State{\LeftComment{Check for intersections with emanating edges}}
      \If {\(\nu_{i} < c_i^{jk}\)} \Comment{In corner \(i\)}
        \State \(\nu_{j} \gets \nu_{j} + e_k^{ij}\)
      \ElsIf {\(\nu_{j} < c_j^{ki}\)} \Comment{In corner \(j\)}
        \State \(\nu_{i} \gets \nu_{j} + e_k^{ij}\)
      \Else \Comment{In middle of fan region}
      \State{\LeftComment{Identify which emanating curves, if any, contain u}}
      \State {$\mathcal{E}_k \gets \left\{\xi:  0 \leq \xi < e_k^{ij}, u \in \Proc{Triangle}\left(i, z_{\he{ij}}[\xi +c_i^{jk}], k\right)\right\}\)}
      \State{\(\epsilon_k \gets \min\left( e_k^{ij} \cup \mathcal{E}_k \right)\)}  \Comment{Take the curve closest to \(i\)}
      \State {\(\nu_{i} \gets \nu_{i} + \epsilon_k\)} \Comment{Edge \(pi\) crosses the first \(\epsilon_k\) such curves}
      \State {\(\nu_{j} \gets \nu_{j} + e_i^{jk} - \epsilon_k\)} \Comment{Edge \(pj\) crosses the rest}
      \EndIf
    \For{\(\corner{i}{jk} \in \Proc{CornersOf}(ijk)\)} \Comment{include opposite corners}
      \State {$n_{pi} \gets \nu_{i} + \sigma_j + \sigma_k $} 
    \EndFor
      
    \State \LeftComment{Compute everything else}
    \State {$r \gets \Proc{UpdateRoundabouts}(n_{pi}, n_{pj}, n_{pk}, r)$} \Comment{\eqref{RoundaboutUpdate}}
    \State {$\Tl1_{pi}, \Tl1_{pj}, \Tl1_{pk} \gets \Proc{UpdateEdgeLengths}(\Tl1, u)$} \Comment{\eqref{NewEdgeLengths}} 
    \State {$R \gets \Proc{RegionFromNormalCoordinates}(n_{pi}, n_{pj}, n_{pk})$}
    \State {$\Tpos0p \gets \Proc{RecoverBarycentric}(R, Z, u)$}\Comment{\appref{RecoverBarycentric}} 
  \end{algorithmic}
\end{algo}

\begin{algo}{\Proc{SplitEdge}$(\halfedge{ij}, u)$}
  \label{alg:SplitEdge}
  \begin{algorithmic}[1]
    \InputConditions{The location to insert a vertex on \(\T1\), as barycentric coordinates \(u\) on a halfedge \(\halfedge{ij} \in \H1\)}
    \OutputConditions{An updated integer coordinate intrinsic triangulation}
    \If {\(n_{ij} \geq 0\)}
    \State{\(k \gets \Proc{OppositeVertex}(\halfedge{ij})\)}
    \State{\(\Proc{SplitFace}(ijk, (u_i, u_j, 0))\)}
    \State{\(\Proc{FlipEdge}(ij)\)}
    \Else
    \State{\(k \gets \Proc{OppositeVertex}(\halfedge{ij})\)}
    \If {\(\Proc{InInterior}(ij)\)}
    \State{\(l \gets \Proc{OppositeVertex}(\halfedge{ji})\)}
    \EndIf
    \State{\(\T1 \gets \text{insert a vertex} \; p \; \text{along} \; ij\)} \Comment{Update combinatorics}
    \State{\(n_{pj}, n_{pi} \gets n_{ij}, n_{ij}\)} \Comment{Compute new normal coordinates}
    \State{\(n_{pk} \gets \max(n_{ki}, n_{jk}, 0)\)}

    \State \LeftComment{Compute everything else}
    \State {$r \gets \Proc{UpdateRoundabouts}(n_{pi}, n_{pj}, n_{pk}, n_{pl}, r)$} \Comment{\eqref{RoundaboutUpdate}}
    \State {$\Tl1_{pi}, \Tl1_{pj}, \Tl1_{pk}, \Tl1_{pl} \gets \Proc{UpdateEdgeLengths}(\Tl1, u)$} \Comment{\eqref{NewEdgeLengths}} 
    \State{\(\Tpos0p \gets u_i \Tpos0i + u_j \Tpos0j\)}
    \EndIf
  \end{algorithmic}
\end{algo}

\begin{algo}{\Proc{RemoveVertex}$(i)$}
  \label{alg:RemoveVertex}
  \begin{algorithmic}[1]
    \InputConditions{An inserted vertex $i$}
    \OutputConditions{Updated triangulation $i$ removed}
    \While {$i$ has degree $> 3$}
    \State {\(ij \gets \) flippable edge incident on $i$}
    \State {$\Proc{FlipEdge}(ij)$}
    \EndWhile
    \State {$\Proc{DeleteVertexAndIncidentEdges}(i)$}
  \end{algorithmic}
\end{algo}

\newcommand{\Vidx}{\mathcal{I}^v}

\begin{algo}{\Proc{ComputeCommonSubdivision}$()$}
  \label{alg:ComputeCommonSubdivision}
  \begin{algorithmic}[1]
    \InputConditions{Nothing beyond the usual data (\ie{} \(\T0, \T1, \ldots\))}
    \State{\LeftComment{Index common subdivision vertices}}
    \State{\(\Vidx \gets \text{index common subdivision vertices}\)}
    \State{\LeftComment{Compute connectivity}}
    \State{\(\textsf{polygons} \gets []\)}
    \For{\(ijk \in \F1\)} \Comment{Always orient such that \(n_{ij} \geq n_{jk}, n_{ki}\)}
    \If {\(e_k^{ij} = 0\)}
    \State{\(\Proc{Append}(\textsf{polygons}, \Proc{SubdivideFace\_Case1}(ijk, \Vidx))\)}
    \Else
    \State{\(\Proc{Append}(\textsf{polygons}, \Proc{SubdivideFace\_Case2}(ijk, \Vidx))\)}
    \EndIf
    \EndFor
    
    \State{\LeftComment{Compute intersection geometry. We denote the locations on \(\T0\) by \(Q^0\) and the locations on \(\T1\) by \(Q^1\)}}
    \For{\(i \in \V1\)} \Comment{Vertices of \(\T1\)}
      \State{\(k \gets \Vidx_i\)} \Comment{Index of vertex in \(S\)}
        \State{\(Q_k^0 \gets q_i^0\)} \Comment{Location on \(\T0\) computed when \(i\) was inserted}
        \State{\(Q_k^1 \gets (i, 1)\)}  \Comment{Location on \(\T1\) is just \(i\) itself}
    \EndFor
    \For{\(ab \in \E0\)} \Comment{Edge intersections}
    \State{\(\gamma = (a, z_1, \ldots, z_l, b) \gets \Proc{ExtractEdge}(ab)\)}
    \For{\(z = (\halfedge{ij}, p, u, v) \in \gamma\)}
    \State{\(k \gets \Vidx_{\he{ij}}[p+1]\)} \Comment{Index of crossing in \(S\)}
    \State{\(Q^0_k \gets (\halfedge{ab}, u)\)} \Comment{Position on \(\T0\) along \(\halfedge{ab}\)}
    \State{\(Q^1_k \gets (\halfedge{ij}, v)\)} \Comment{Position on \(\T1\) along \(\halfedge{ij}\)}
    \EndFor
    \EndFor

    \State{\Return \(\textsf{polygons}, Q^0, Q^1\)}
  \end{algorithmic}
\end{algo}

\begin{algo}{\Proc{DelaunayRefinement}$(\theta_{\min{}})$}
  \label{alg:DelaunayRefinement}
  \begin{algorithmic}[1]
    \InputConditions{A minimum allowed angle \(\theta_{\min{}}\).}
    \OutputConditions{An intrinsic triangulation \(\T1\) whose corner angles are all at least \(\theta_{min}\)}
    \State $\Proc{FlipToDelaunay}()$
    \While {\(\T1\) has triangles with angles less than \(\theta_{\min}\)}
    \State {\(ijk \gets \) any triangle with an angle less than \(\theta_{\min}\)}
    \State{\LeftComment{Find the circumcenter via the exponential map}}
    \State {\(v_c \gets \) circumcenter barycentric coordinates} \Comment{Equations \ref{eq:HomogeneousCircumcenterBary}, \ref{eq:CircumcenterBary}}
    \State{\LeftComment{Barycentric coordinate offset from barycenter to circumcenter}}
    \State {\(\delta v_c \gets v_c - (1/3, 1/3, 1/3)\)}
    \State{\LeftComment{Transform offset to face tangent space}}
    \State {\(V \gets \Proc{BarycentricOffsetToTangentVector}(\delta v_c)\)}
    \State{\LeftComment{Evaluate exponential map from face barycenter}}
    \State \(c \gets \Proc{Exp}(\Proc{Barycenter}(ijk), V)\)
    \If {\(c\) lies inside the mesh}
    \State \(\Proc{InsertCircumcenter}(ijk)\)
    \Else
    \State {\(lm \gets \) boundary edge separating \(c\) from \(ijk\)}
    \State \(p \gets \Proc{SplitEdge}(lm, 0.5)\)
    \State{\LeftComment{Must flip to Delaunay before computing Dijkstra ball}}
    \State $\Proc{FlipToDelaunay}()$
    \State {\LeftComment{Remove inserted vertices from \(lm\)'s diametral ball}}
    \State {\(\mathsf{ball} = \{i \in \V1 : \Proc{DijkstraDistance}(\E1, i, p) < \ell_{lm}\}\)}
    \For {\(i \in \mathsf{ball}\)}
    \State \(\Proc{RemoveVertex}(i)\)
    \EndFor
    \EndIf
    \State $\Proc{FlipToDelaunay}()$
    \EndWhile
  \end{algorithmic}
\end{algo}

\section{Barycentric Coordinates Recovery}
\label{app:RecoverBarycentric}

In \secref{FaceSplit}, we recover the barycentric coordinates of a newly inserted vertex on $\T0$ via interpolation along a polygonal subregion $R$ of triangle $abc \in \F0$. Here we give a full expression for the necessary small linear system.

Precisely, let \(3 \leq \rho \leq 6\) denote the number of corners of \(R\).
Let the \(\smash{m^{\text{th}}}\) corner of \(R\) have barycentric coordinates \(\smash{u^{(m)}_{a}, u^{(m)}_b, u^{(m)}_c}\) on \(abc \in \F0\) and barycentric coordinates \(\smash{v^{(m)}_i, v^{(m)}_j, v^{(m)}_k}\) on \(ijk \in \F1\), all of which are know. We also know the barycentric coordinates \(v_i\) for $p$ in \(ijk\).  We then want to solve for the corresponding \(u_a\) on \(abc\). We proceed in two steps: first, we express \(v\) as a linear combination $\xi$ of the \(v^{(m)}\). Then, we apply this same linear combination to the \(u^{(m)}\) to obtain \(u\). Concretely, we first solve for the minimum-norm solution of the underdetermined system
\begin{equation}
    \begin{pmatrix}
      v^{(0)}_i & v^{(1)}_i & \cdots & v_i^{(\rho)}\\
      v^{(0)}_j & v^{(1)}_j & \cdots & v_j^{(\rho)}\\
      v^{(0)}_k & v^{(1)}_k & \cdots & v_k^{(\rho)}
    \end{pmatrix}
    \begin{pmatrix}
      \xi_0\\\xi_1\\\vdots\\\xi_{\rho}
    \end{pmatrix}
    = \begin{pmatrix}v_i\\v_j\\v_k\end{pmatrix},
\end{equation}
and then set
\begin{equation}
  \label{eq:RecoverInterp}
  u_a := \sum_m u^{(m)}_a \xi_m,\quad
  u_b := \sum_m u^{(m)}_b \xi_m,\quad
  u_c := \sum_m u^{(m)}_c \xi_m.
\end{equation}
Note that while one often seeks a nonnegative $\xi$, any solution will suffice here: we only use \(\xi\) to interpolate in \eqref{RecoverInterp}.

\section{Delaunay Refinement Details}
\subsection{Removing Extra Vertices}
\label{app:RemovingExtraVertices}
When Chew's second algorithm splits an edge, it removes all inserted circumcenters within a geodesic ball centered at the edge's midpoint. These vertices must be removed, but it is okay to removes additional interior inserted vertices. \citet[Section 3.4.2]{Shewchuk:1997:DRMG} observes that the algorithm can only perform finitely many edge splits. As long as one removes all interior inserted vertices within the geodesic ball---and never removes vertices along the boundary---the algorithm will still perform only finitely many edge splits. Hence, it must terminate as usual following the final edge split, even if one removes extra circumcenters during edge splits.

\subsection{Proof of Correctness on Watertight Meshes}
\label{app:DelaunayRefinementProof}

Here we seek to prove that \Proc{DelaunayRefinement} (\algref{DelaunayRefinement}) succeeds, in the basic case of a closed surface with bounded cone angles.
We will not prove the more general boundary case here, but experimentally we observe success on a large dataset (\secref{Robustness}).

\begin{theorem}[Delaunay refinement, no boundary]
  On meshes without boundary, with vertex angle sums at least \(60^\circ\), \algref{DelaunayRefinement} produces a Delaunay mesh with triangle corner angles at least \(30^\circ\).
\end{theorem}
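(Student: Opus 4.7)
The plan is to mimic Chew's classical planar analysis, using the two ingredients available in the intrinsic setting: (i) the law of sines $\ell = 2r\sin\theta$ relating the smallest corner angle $\theta$, shortest edge length $\ell$, and circumradius $r$ of each Euclidean triangle (which applies face-by-face since each intrinsic face admits a planar layout); and (ii) the empty geodesic circumdisk property of intrinsic Delaunay triangulations, noted in the paragraph following \eqref{DelaunayCotans}. My central claim is a \emph{no-shrinkage lemma}: after each pass through the \texttt{while} loop of \algref{DelaunayRefinement}, the minimum edge length $\ell_{\min}$ of $\T1$ never decreases.

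To prove the lemma, consider an iteration in which we insert the circumcenter $c$ of a face $ijk \in \F1$ whose minimum corner angle $\theta$ is strictly less than $30^\circ$, and whose shortest edge has length $\ell$. The law of sines gives $\ell = 2r\sin\theta < 2r\sin(30^\circ) = r$. Because $\T1$ was Delaunay before insertion (guaranteed by the preceding call to \Proc{FlipToDelaunay}), the open geodesic disk of radius $r$ about $c$ contains no vertices, so the geodesic distance from $c$ to every existing vertex is at least $r$. Hence after the subsequent \Proc{FlipToDelaunay}, every edge incident to $c$ has length at least $r > \ell \geq \ell_{\min}$, while edges not incident to $c$ are either unchanged or have been removed by Lawson-style flips (which only touch edges adjacent to the newly inserted vertex). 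This gives $\ell_{\min}$ non-decreasing, as claimed.

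Termination then follows by a standard packing argument. Let $\ell_0 > 0$ denote the minimum edge length immediately after the initial \Proc{FlipToDelaunay}; the lemma implies that every vertex inserted during the loop lies at geodesic distance at least $\ell_0$ from every other vertex of $\V1$, so the open geodesic balls of radius $\ell_0/2$ about vertices of $\T1$ are pairwise disjoint. Since $M$ is compact it has finite area, so only finitely many such balls can fit, bounding the total number of vertex insertions. When the loop exits, no face has a corner angle below $30^\circ$, and the final \Proc{FlipToDelaunay} ensures the resulting triangulation is Delaunay.

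The main obstacle I expect is establishing that each geometric subroutine is well-defined throughout the execution. In particular, locating the circumcenter requires that the true geodesic circumcenter of $ijk$ be reachable by tracing the exponential map from the barycenter; the empty circumdisk property forces the closed geodesic disk of radius $r$ around $c$ to embed isometrically into $M$, which guarantees both the uniqueness of $c$ and the success of the exponential walk. The $60^\circ$ hypothesis on vertex angle sums plays the role of Chew's planar ``input angle $\geq 60^\circ$'' condition: it ensures every cone vertex can accommodate Delaunay triangles that individually satisfy the angle bound, so the algorithm does not stall by repeatedly attempting to split a face pinched against a narrow cone. Once these geometric preconditions are verified at every iteration, the combinatorial packing argument above goes through without modification.
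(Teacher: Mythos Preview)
Your overall architecture matches the paper's: establish a lower bound on edge lengths that persists through all insertions, then finish with an area-packing argument. But there is a genuine gap in your no-shrinkage lemma, and it is precisely where the $60^\circ$ hypothesis actually enters.

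When you argue that ``every edge incident to $c$ has length at least $r$,'' you are implicitly assuming that every such edge joins $c$ to a \emph{distinct} vertex. On an intrinsic triangulation (a $\Delta$-complex), the Delaunay flip after inserting $c$ can create a \emph{self-edge}: a geodesic loop from $c$ back to $c$. The empty-circumdisk argument says nothing about the length of such a loop, since both endpoints are $c$ itself. If a short self-edge appears, your inductive invariant $\ell_{\min}\ge\ell_0$ fails, and the next bad triangle may have $\ell<\ell_0$, breaking the chain $r>\ell\ge\ell_0$. The paper handles this by a case split on the homotopy class of the loop on the punctured surface. If the loop encircles a single cone vertex $v$, one applies the law of cosines in the flat cone at $v$: with cone angle $\theta\ge 60^\circ$ and both arc endpoints at distance $\ge R$ from $v$, the loop length is $R\sqrt{2(1-\cos\theta)}\ge R$. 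If the loop is in a nontrivial homotopy class not contractible to a single puncture, one slides it (at constant length) until it touches some vertex and bounds it below by an existing Delaunay edge via a nearest-neighbor lemma. This is the only place the $60^\circ$ hypothesis is used; your description of its role (``ensures every cone vertex can accommodate Delaunay triangles'') is not what is happening.

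Two smaller points. First, your claim that post-insertion Delaunay flips only create edges incident to $c$ is correct but needs the empty-circumcircle characterization of Delaunay edges (the paper isolates this as a lemma); you should state it rather than appeal to ``Lawson-style flips.'' Second, the step from ``minimum edge length $\ge\ell_0$'' to ``pairwise geodesic distance $\ge\ell_0$'' requires knowing that each vertex is joined by a Delaunay edge to its geodesic nearest neighbor; the paper proves this as a separate lemma, and you should too.
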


\textsc{Proof.}
  By definition, \Proc{DelaunayRefinement} only terminates when the triangulation is a Delaunay triangulation which satisfies the angle bound, so we just need to prove that termination occurs after a finite number of iterations.
  We will show this by establishing that \Proc{DelaunayRefinement} maintains a minimum spacing between all vertices in the mesh, so the number of insertions is bounded by surface area.
  Our argument will generally follow the planar proof of \citet[Section 3.2.1]{Shewchuk:1997:DRMG}, though extra care is needed in the intrinsic case, where self edges may connect a vertex to itself.

  In particular, we consider the length of the shortest edge in the initial mesh's intrinsic Delaunay triangulation, $\delta := \min_{ij} \ell_{ij}$.
  We will show that the minimum edge length in each subsequent Delaunay triangulations is at least $\delta$.
  Then all vertices must be separated by a distance at least \(\delta\), since \lemref{DelaunayNN}, each vertex is connected to its geodesic nearest neighbor.
  Hence, each vertex is contained in an open disk of radius \(\tfrac 12 \delta\) which is disjoint from all other disks.
  As the input mesh has finite surface area, we conclude that \algref{DelaunayRefinement} can only insert finitely many vertices, and thus must terminate.

  It remains to show that \Proc{DelaunayRefinement} never creates an edge of length less than $\delta$.
  It is convenient to convert the angle bound $\alpha$ to a \emph{circumradius-to-shortest-edge ratio} bound $B = \tfrac{1}{2\sin \alpha}$ \cite[Section 3.1]{Shewchuk:1997:DRMG}.
  Having corner angles at least \(\alpha=30^\circ\), is equivalent to a circumradius-to-shortest-edge ratio of at most \(B = 1\), and  thus we insert the circumcenters of triangles with $B > 1$.

  We proceed by induction.
  All initial edges have length at least $\delta$ by definition.
  Now consider inserting vertex $i$ at the circumcenter of triangle \(jkl\) with circumradius $R$.
  Since we only split triangles with $B > 1$, and \(jkl\)'s edges have length at least \(\delta\), we must have $R > \delta$.
  By \lemref{RefinementIncidentEdges} all new edges in the Delaunay triangulation must be incident on $i$, and since \(jkl\) had an empty geodesic circumcircle, there can be no other vertices within distance $R > \delta$.
  Thus new all edges to other vertices have length at least $\delta$.
  We must now consider self edges connecting the new vertex $i$ to itself.

  Gluing together the two ends of a self edge yields a loop; we will split into cases based on the homotopy class of this loop on the punctured surface (before the insertion of $i$).
  First, note that the loop cannot be contractible to a point, since the original edge is geodesic.
  Then we will split in to two cases: either the loop contracts around a single vertex, or it does not.

  \setlength{\columnsep}{.5em}
  \setlength{\intextsep}{.5em}
  \begin{wrapfigure}[14]{r}{54pt}
    \vspace{-0\baselineskip}\includegraphics{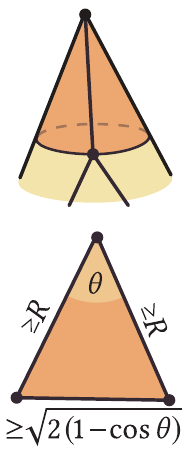}
  \end{wrapfigure}
  If the loop contracts around a single vertex, then the self edge encloses a degree-1 vertex.
  The degree-1 vertex must have distance at least \(R\) to the inserted vertex, and has angle sum at least \(60^\circ\). Thus, by the law of cosines, the length of the self edge must be at least \[\sqrt{R^2 + R^2 - 2R^2 \cos \theta} = R\sqrt{2(1-\cos \theta)}.\]
  Since \(\cos 60^\circ = \tfrac 12\), and \(1-\cos \theta\) is increasing with \(\theta\), this shows that the self edge has length at least \(R\) whenever \(\theta\) is at least \(60^\circ\).
  
  If the loop is not in a homotopy class contractible about a single vertex, then the shortest loop $\gamma_{\min{}}$ in the homotopy class is non-constant.
  By \lemref{LoopTouchesVertex}, we can take $\gamma_{\min{}}$ to touch some vertex $a$, and note that since $\gamma_{\min{}}$ is the \emph{shortest} loop our original self edge must be at least as long as $\gamma_{\min{}}$.
  Then by \lemref{DelaunayNN} $a$ has an edge at least a long as $\gamma_{\min{}}$, and thus the self edge has length $\geq |\gamma_{\min{}}| \geq \delta$.

  Thus, we conclude that \algref{DelaunayRefinement} never introduces an edge of length less than \(\delta\), which means that it must terminate after inserting finitely many vertices.
  \hfill\qed

\begin{lemma}
  For any pair of vertices \(i,j \in V\), let \(\Gamma_{ij}\) be the set of non-constant geodesics connecting \(i\) to \(j\). Then \[d_{ij} := \inf_{\gamma \in \Gamma_{ij}} \text{length}(\gamma) > 0.\]
  \label{lem:FiniteLength}
\end{lemma}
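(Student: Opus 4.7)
The argument splits on whether $i$ and $j$ are the same vertex. If $i \neq j$, then any non-constant geodesic from $i$ to $j$ has length at least the geodesic distance $d_M(i,j)$ on $M$, which is strictly positive because $d_M$ is a metric and $i,j$ are distinct points. Hence $d_{ij} \geq d_M(i,j) > 0$ in this case and there is nothing further to show.

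The interesting case is $i = j$, where $\Gamma_{ii}$ consists of non-constant geodesic loops at $i$. The key observation is that a sufficiently small neighborhood of $i$ is isometric to a flat cone ball of apex angle $\Theta_i$ (the angle sum at $i$), and no geodesic emanating from the apex of a flat cone ever returns to the apex---when the cone is unfolded it becomes a straight radial ray that never wraps back. Concretely, I would take $\epsilon_i > 0$ to be the distance from $i$ to the complement of the open star of $i$ in $\T1$, so that the triangles incident to $i$ lay out isometrically around the origin of $T_i M$ and $\exp_i$ maps a radius-$\epsilon_i$ ball in $T_i M$ isometrically onto $B_{\epsilon_i}(i) \subset M$.

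Given this, I would argue that any geodesic loop $\gamma$ at $i$ of length $\ell$ is contained in $\bar B_{\ell/2}(i)$, since every point $\gamma(t)$ is reachable from $i$ along the shorter half of $\gamma$, incurring distance at most $\ell/2$. If $\ell < 2\epsilon_i$, then $\gamma$ would fit entirely inside the flat cone neighborhood, contradicting the nonexistence of apex-to-apex geodesic loops established above. Hence $\ell \geq 2\epsilon_i$ for every $\gamma \in \Gamma_{ii}$, which gives $d_{ii} \geq 2\epsilon_i > 0$.

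\textbf{Main obstacle.} The delicate step is justifying the flat-cone-neighborhood claim: a priori, $B_{\epsilon_i}(i)$ could contain points reached by shortcuts that briefly leave the star of $i$ and return, so one must verify that for small $\epsilon_i$ the exponential map $\exp_i$ is actually injective on $B_{\epsilon_i}(0) \subset T_i M$. This is standard for polyhedral surfaces, but it deserves an explicit verification---e.g.\ by noting that any path from $i$ that exits the open star must cross an edge opposite $i$, and so has length at least the distance from $i$ to that edge, which is positive by finiteness of the star.
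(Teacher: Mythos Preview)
Your proposal is correct, and it takes a genuinely different route from the paper. The paper's proof is a one-line citation: it invokes \cite[Proposition~1]{Indermitte:2001:VDP}, which states that for any $L>0$ there are only finitely many geodesic arcs from $i$ to $j$ of length at most $L$; since non-constant geodesics have positive length, the infimum over this finite set is positive. This handles both $i\neq j$ and $i=j$ uniformly without examining local geometry.

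Your argument instead splits on whether $i=j$ and, in the interesting self-loop case, exploits the local cone structure at $i$ together with the containment $\gamma\subset \bar B_{\ell/2}(i)$ to force $\ell\ge 2\epsilon_i$. This is more elementary and self-contained: you never need the global finiteness statement from the literature, only the standard fact that a polyhedral vertex has a positive injectivity radius and that radial geodesics on a cone do not return to the apex. The trade-off is that your proof is longer and requires the careful justification you flag in the ``main obstacle'' paragraph (which you handle correctly via the minimum altitude from $i$ in its star), whereas the paper's proof outsources all the work to a black-box citation. Both are valid; yours would be preferable in a setting where one wants to avoid external dependencies, while the paper's is more economical given that the Indermitte--Liebling--Troyanov--Cl\'emen\c{c}on result is already being used elsewhere in the literature on intrinsic triangulations.
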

\begin{proof}
  This follows directly from \cite[Proposition 1]{Indermitte:2001:VDP}, which states that for any \(L>0\), the number of geodesic arcs from \(i\) to \(j\) of length at most \(L\) is finite. Since any geodesic of length 0 is constant, and thus not in \(\Gamma_{ij}\), this implies that \(d_{ij} > 0\).
\end{proof}

\begin{lemma}
  For any vertex \(i \in V\), the intrinsic Delaunay triangulation contains an edge to \(i\)'s nearest neighbor.
  \label{lem:DelaunayNN}
\end{lemma}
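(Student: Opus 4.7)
The plan is to adapt Shewchuk's classical planar argument: show the geodesic disk with diameter $ij$ is empty of other vertices, then invoke the empty circumdisk property to force $ij$ to be an edge.

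First, let $j$ be a nearest neighbor of $i$ at geodesic distance $d$, let $\gamma$ be a shortest geodesic from $i$ to $j$, and let $m = \gamma(d/2)$ be its midpoint. For any other vertex $k$, the triangle inequality together with the nearest-neighbor bound $d(i,k) \geq d$ gives $d(m,k) \geq d(i,k) - d(i,m) \geq d - d/2 = d/2$; hence the open geodesic ball $B(m, d/2)$ contains no vertex of $\T{1}$. In particular, $\gamma$ visits no vertex other than its endpoints, since any intermediate vertex would lie within $d/2$ of $m$.

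Next I would argue by contradiction: suppose $ij$ is not an edge of $\T{1}$. Let $T_0 = iab \in \F{1}$ be the first triangle $\gamma$ traverses after leaving $i$. If $\gamma$ runs along $ia$ or $ib$ initially, the previous paragraph forces $j = a$ or $j = b$, giving an edge directly. Otherwise $\gamma$ exits $T_0$ through the interior of edge $ab$. Unfolding the triangle strip traversed by $\gamma$ into the plane, the unfolded image $j'$ of $j$ lies on the straight-line continuation of $\gamma$ past $ab$, with $|ij'| = d \leq |ia|, |ib|$ by the nearest-neighbor property.

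A short Euclidean computation then places $j'$ strictly inside the planar circumdisk $D_0$ of $T_0$. In triangle $iaj'$, the bound $|ij'| \leq |ia|$ yields $\angle aj'i \geq (\pi - \angle aij')/2$; symmetrically $\angle bj'i \geq (\pi - \angle bij')/2$. Since ray $ij'$ crosses the interior of segment $ab$, it splits both $\angle aj'b$ and $\angle aib$, so summing gives $\angle aj'b \geq \pi - \tfrac{1}{2}\angle aib > \pi - \angle aib$. By the converse of the inscribed angle theorem, with $j'$ on the far side of $ab$ from $i$, this places $j'$ strictly inside $D_0$.

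Finally, this contradicts the empty circumdisk property referenced in \secref{IntrinsicDelaunay}. Since each interior edge of the unfolded strip is locally Delaunay, the strip is itself a planar Delaunay triangulation of its union region, and so its planar circumdisk $D_0$ is empty of strip vertices in its interior; yet $j'$ is the unfolded position of the strip vertex $j$ and lies strictly inside $D_0$, a contradiction. I expect the main obstacle to be ensuring the unfolded strip is a simple (non-self-overlapping) planar region so the standard 2D equivalence between local and global Delaunay applies---this can fail near saddle vertices---in which case one must instead appeal to the intrinsic empty geodesic circumdisk property directly, converting $j' \in \mathrm{int}\, D_0$ into $d_M(O_0, j) < R_0$ via the local isometry of the unfolding around the circumcenter of $T_0$.
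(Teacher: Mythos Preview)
Your first paragraph already contains the whole proof, and the paper's argument is essentially just that paragraph plus one citation. Once you have shown that the open geodesic disk $B(m,d/2)$ through $i$ and $j$ contains no vertices, the empty-circle characterization of Delaunay \emph{edges} from \citet[Definition~3]{Bobenko:2007:ADL} immediately gives that $ij$ is Delaunay. The paper phrases the disk slightly differently---as the circle through $i$ and $j$ internally tangent at $j$ to the ball $B(i,d_{ij})$---but it is the same diameter-$ij$ disk you construct. You prove the right fact and then abandon it for a much longer detour.

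The detour through the triangle circumdisk of $T_0$ can be made to work, but not via either of the two routes you sketch in the last paragraph. The claim ``the unfolded strip is a planar Delaunay triangulation of its union region, hence $D_0$ is empty of strip vertices'' genuinely fails when the strip self-overlaps, as you note. Your fallback---``convert $j'\in\mathrm{int}\,D_0$ into $d_M(O_0,j)<R_0$ via the local isometry of the unfolding''---has the same defect: the straight segment $O_0'j'$ need not lie in the strip (for an obtuse $T_0$, $O_0'$ is not even in $T_0$), so it does not represent a surface path, and the development of the geodesic circumdisk $\hat D_0$ need not agree with the strip unfolding once the strip leaves $\hat D_0$. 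What does work is a one-edge-at-a-time induction in the unfolded picture: the local Delaunay inequality at each crossed edge $e_k$ gives $\mathrm{int}\,D_{k-1}\cap\{\text{far side of }e_k\}\subseteq \mathrm{int}\,D_k$, and since the straight line $ij'$ crosses the line through $e_k'$ exactly once, $j'$ stays on the far side. Propagating from your strict inclusion $j'\in\mathrm{int}\,D_0$ yields $j'\in\mathrm{int}\,D_m$, contradicting $j'\in\partial D_m$. This induction needs only the local Delaunay condition at each edge and is indifferent to whether the strip embeds, so your self-overlap worry, while correct for the argument you wrote, is avoidable.
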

\begin{proof}
This is a standard result, which we include for completeness.
Let \(j\) be \(i\)'s nearest neighbor, \ie{} \(j := \text{argmin}_j d_{ij}\).
Note that \(j\) may equal \(i\), and \(d_{ij} > 0\) by \lemref{FiniteLength}.
Consider the disk \(D\) of radius \(d_{ij}\) centered at \(i\).
Since \(j\) is \(i\)'s nearest neighbor, \(D\) contains no vertex other than \(i\).
Thus, the circle which goes through \(i\) and \(j\) and is tangent to \(D\) at \(j\) has empty interior, and its boundary contains no vertices other than \(i\) and \(j\).
We conclude that \(ij\) is in the Delaunay triangulation \cite[Definition 3]{Bobenko:2007:ADL}.
\end{proof}

\begin{lemma}
  All edges created in \Proc{DelaunayRefinement} following the insertion of a vertex \(i\) and flipping to Delaunay are incident on \(i\).
  \label{lem:RefinementIncidentEdges}
\end{lemma}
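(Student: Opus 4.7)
The plan is to follow the classical incremental Lawson-flip analysis, adapted to the $\Delta$-complex setting. I would proceed by induction on the sequence of edge flips executed by \Proc{FlipToDelaunay} after the insertion of vertex $i$. The base case is immediate: both \Proc{SplitFace} and \Proc{SplitEdge} introduce only edges that have $i$ as an endpoint---the three corner-to-$i$ spokes in the face-split case, or the two halves of the split edge plus one or two new spokes in the edge-split case.

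The heart of the argument is to establish and maintain the following invariant throughout the flipping cascade: \emph{every non-Delaunay edge has $i$ as a vertex of at least one of its two incident triangles}. This holds immediately after insertion because the Delaunay status of an edge $jk$ depends only on the four vertices of its two incident triangles. For any edge $jk$ not bordering the triangle(s) where $i$ was inserted, those two triangles are unchanged from the pre-insertion Delaunay triangulation, so $jk$ is still Delaunay. Every remaining edge whose neighborhood did change is bordered on at least one side by a triangle containing $i$.

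For the inductive step, consider flipping a non-Delaunay edge $jk$ that satisfies the invariant, and suppose without loss of generality that one of its adjacent triangles is $ijk$ and the other is $jkl$. The flip replaces $jk$ with a new edge $il$, which is incident on $i$ as claimed. The only other edges whose Delaunay status might subsequently change are the four quadrilateral-boundary edges $ij, ik, jl, kl$: the first two are trivially incident on $i$, while $jl$ and $kl$ now have $ijl$ or $ikl$, respectively, as a neighboring triangle, maintaining the invariant. Iterating, every edge produced by the flipping process is incident on $i$.

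The main subtlety I anticipate is the $\Delta$-complex setting, where repeated vertices and self-edges are allowed: one must verify that the argument still goes through when $ijk$ and $jkl$ share more than just the edge $jk$, or when $l$ happens to coincide with $i$. These degenerate cases reduce to the generic one once one checks that flips of non-Delaunay edges are always valid---the required convexity condition of \secref{EdgeFlip} follows from the strict opposite-angle inequality characterizing a non-Delaunay edge. Boundary edges pose no issue, since \Proc{FlipToDelaunay} simply never attempts to flip an edge lacking a second incident triangle, so the invariant is vacuously maintained there.
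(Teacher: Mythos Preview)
Your inductive Lawson-flip argument is correct, but it takes a different route from the paper. The paper gives a short global argument via the empty-circumcircle characterization of Delaunay edges \cite[Definition 3]{Bobenko:2007:ADL}: any edge $jk$ of the post-flip Delaunay triangulation that is not incident on $i$ admits a circumcircle empty of all vertices, in particular empty of $i$; that same circle was therefore empty before $i$ was inserted, so $jk$ was already a Delaunay edge and hence not ``created.'' This avoids tracking the flip cascade entirely and needs no invariant or case analysis for $\Delta$-complex degeneracies.

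Your approach, by contrast, proves the slightly stronger statement that \emph{every} edge produced at \emph{any} intermediate step of the cascade is incident on $i$, not just those surviving to the final triangulation. That is more than the main theorem needs (it only uses the final edge set), but it is the classical incremental-insertion picture and makes the mechanism explicit. The cost is the extra bookkeeping you identify: verifying flippability of non-Delaunay edges and handling self-edges or coincident vertices in the $\Delta$-complex setting. None of those are fatal, but the paper's circumcircle argument simply sidesteps them.
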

\begin{proof}
  Again, we follow the planar proof of \citet[Lemma 12]{Shewchuk:1997:DRMG}. We wish to prove that all Delaunay edges which are not incident on \(i\) were Delaunay before inserting \(i\). This follows from the fact that edges of a Delaunay triangulation satisfy an empty circumcircle condition~\cite[Definition 3]{Bobenko:2007:ADL}. If an edge's circumcircle is empty after inserting vertex \(i\), it must have been empty before too, so the edge was already Delaunay.
\end{proof}

\begin{lemma}
  Any geodesic loop $\gamma$ is isotopic to a geodesic loop $\gamma'$ of the same length which touches a vertex.
  \label{lem:LoopTouchesVertex}
\end{lemma}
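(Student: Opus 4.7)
The plan is to slide $\gamma$ perpendicular to itself through a one-parameter family of parallel geodesic loops, all of the same length, and stop at the first moment the family meets a vertex. If $\gamma$ already passes through a vertex, set $\gamma' := \gamma$ and we are done, so assume throughout that $\gamma \subset M \setminus V$. Fix a unit normal field $N$ along $\gamma$ and, for $t \geq 0$, define the normal pushoff $\gamma_t(p) := \exp_p(t\, N(p))$ for $p \in \gamma$. Because a small tubular neighborhood of $\gamma$ in $M \setminus V$ is locally isometric to a flat strip $\gamma \times [-\epsilon, \epsilon]$, the map $(p, t) \mapsto \gamma_t(p)$ is a local isometry wherever it stays in $M \setminus V$; in particular each level $\gamma_t$ is then a closed geodesic of length exactly $|\gamma|$.

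The next step is to show the family reaches a vertex in finite time. Let $d := d(\gamma, V)$, the distance from $\gamma$ to the vertex set; by compactness of $M$ and finiteness of $V$, this infimum is attained at some vertex $v \in V$. A standard first-variation argument shows that a minimizing segment $\sigma$ from $\gamma$ to $v$ meets $\gamma$ perpendicularly at some point $p_0 \in \gamma$, so after flipping the sign of $N$ if necessary so that $N(p_0)$ points along $\sigma$, the pushoff from $p_0$ reaches $v$ at time exactly $d$. Hence $t^{*} := \inf\{t > 0 : \gamma_t \cap V \neq \emptyset\}$ satisfies $0 < t^{*} \leq d < \infty$. Setting $\gamma' := \gamma_{t^{*}}$, the continuous family $\{\gamma_s\}_{s \in [0, t^{*}]}$ provides the required isotopy from $\gamma$ to $\gamma'$; by continuity and the local-isometry argument, $\gamma'$ is a piecewise-straight closed loop of length $|\gamma|$ passing through a vertex.

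The main obstacle I expect is certifying that the limit curve $\gamma_{t^{*}}$ still deserves to be called a ``geodesic loop.'' Away from the vertex it meets, $\gamma_{t^{*}}$ is straight by the local-isometry argument, but at the vertex the two one-sided tangent directions are dictated by the limit of the flat strip and need not satisfy the ``$\geq \pi$ on each side'' condition required of a strict geodesic through a cone point. This is reconciled by noting that the downstream use of the lemma relies only on the conservation of length $|\gamma'| = |\gamma|$ and on $\gamma'$ passing through a vertex, both of which are immediate from the construction; thus a possible corner at the vertex is not an obstruction to the intended application.
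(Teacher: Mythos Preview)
Your proposal is correct and follows essentially the same approach as the paper: slide $\gamma$ along its normal direction until it first meets a vertex, using the fact that the length is preserved because $\gamma$ is geodesic. The paper justifies length preservation via the first-variation formula $\tfrac{d}{dt}|\gamma| = \int_\gamma \kappa\,ds = 0$, whereas you use the equivalent flat-strip/local-isometry picture; your version is more detailed (finite hitting time, the status of the limit curve), but the underlying idea is identical.
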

\begin{proof}
  $\gamma$ can ``slide'' until it touches a vertex without changing its length.
  Precisely, consider a unit-speed motion of \(\gamma\) within the surface along its outward normal direction.
  During the motion, $\smash{\tfrac{d}{dt}|\gamma|} = \smash{\int_\gamma \kappa(s) \; ds}$, where $\kappa$ is the geodesic curvature of \(\gamma\).
  Since $\gamma$ is a geodesic, $\kappa = 0$: its length does not change.
  Thus we can construct \(\gamma'\) by sliding $\gamma$ along the surface until it touches a vertex.
\end{proof}
As an aside, we note that geodesic loops which do not touch a vertex only occur in non-generic configurations.

\begin{figure}[b]
  \includegraphics{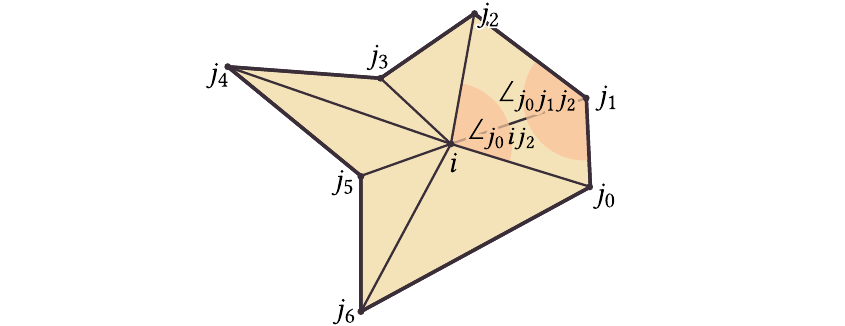}
  \caption{
    The relevant angles for \thmref{VertexRemovalFlipping}.
    \label{fig:VertexRemovalAngles}
  }
\end{figure}

\section{Simplicial Vertex Removal}
\label{app:VertexRemovalProof}

In \secref{VertexRemoval}, we consider removing a vertex by flipping edges until the vertex has degree three and then deleting it.
Past work has also proposed this approach in the purely-topological setting \cite[Section 5.4]{Schaefer:2002:ANC}, but here we must respect geometric constraints.
In particular, edges can only be flipped geometrically if they are contained in a convex quadrilateral (\secref{EdgeFlip}).
Here we prove that flipping edges to remove vertices is indeed a viable strategy in the Euclidean setting as well: one can always find an edge to flip.

\begin{theorem}[Vertex Removal, simplicial]
  If a vertex $i$ in a simplicial complex has cone angle $2 \pi$ and degree $d>3$, then some edge $ij$ incident on $i$ can be flipped to decrease the degree of $i$.
  \label{thm:VertexRemovalFlipping}
\end{theorem}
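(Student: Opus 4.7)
The plan is to flatten the star of $i$ into the Euclidean plane; this is valid precisely because the cone angle at $i$ is $2\pi$. The link of $i$ becomes a simple polygon $P$ with vertices $j_1, \ldots, j_d$ in cyclic order, and the incident triangles $T_k = i j_k j_{k+1}$ form a fan triangulation of $P$ with $i$ in its interior. In this planar picture, flipping spoke $ij_k$ replaces it with the chord $j_{k-1}j_{k+1}$, which is valid precisely when the quadrilateral $Q_k = (i, j_{k-1}, j_k, j_{k+1})$ is strictly convex. Two of $Q_k$'s angles (at $j_{k-1}$ and $j_{k+1}$) come from a single triangle and are automatically $<\pi$, so convexity reduces to the two local conditions $a_k := \alpha_{k-1} + \alpha_k < \pi$ (angle at $i$, where $\alpha_j$ denotes the angle of $T_j$ at $i$) and $b_k < \pi$ (the interior angle of $P$ at $j_k$).

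Suppose for contradiction that no spoke is flippable, and set $A := \{k : a_k \geq \pi\}$ and $B := \{k : b_k \geq \pi\}$, so $A \cup B = \{1,\ldots,d\}$. I would bound $|A|$ by arguing that no two elements of $A$ can sit at cyclic distance $\geq 2$: if $k_1, k_2 \in A$ had such spacing, then $a_{k_1} + a_{k_2}$ decomposes as a sum of four \emph{distinct} $\alpha_j$'s (the disjoint pairs $\{\alpha_{k_1-1},\alpha_{k_1}\}$ and $\{\alpha_{k_2-1},\alpha_{k_2}\}$), and since $\sum_j \alpha_j = 2\pi$ with each $\alpha_j > 0$, this four-term sum is at most $2\pi$ (strictly less when $d>4$), contradicting $a_{k_1}+a_{k_2}\geq 2\pi$. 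Hence $A$ is a cyclic interval, and since any interval of length $\geq 3$ already contains an endpoint pair at cyclic distance $\geq 2$, we conclude $|A|\leq 2$. For $|B|$, the classical fact that every simple polygon has at least three strictly convex vertices gives $|B|\leq d-3$. Combining, $|A\cup B| \leq |A|+|B| \leq 2+(d-3) = d-1 < d$, contradicting $A\cup B = \{1,\ldots,d\}$. Any uncovered index $k$ supplies a flippable spoke $ij_k$, and its flip removes the edge $ij_k$ while introducing an edge $j_{k-1}j_{k+1}$ not incident to $i$, strictly decreasing $\deg(i)$.

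The main obstacle I anticipate is the borderline $d=4$ case, where the sum of the four distinct $\alpha_j$'s is \emph{exactly} $2\pi$, so degenerate configurations such as a square with $i$ placed at its center (every $a_k = \pi$ exactly, every $b_k < \pi$) fail the strict convexity criterion for every spoke. Such non-generic configurations should be dispatched either by a small perturbation / limiting argument, or by invoking the intrinsic mollification already used elsewhere in the paper (\secref{RobustImplementation}), which enforces the relevant strict inequalities in practice and leaves the counting argument intact.
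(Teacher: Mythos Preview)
Your counting argument is essentially the paper's: both bound the number of ``bad'' inner angles $a_k$ (at most two, via the disjoint-angular-sector observation) and ``bad'' outer angles $b_k$ (at most $d-3$, via the angle sum $(d-2)\pi$, which is equivalent to your three-convex-vertices fact), then pigeonhole. The one substantive difference is the treatment of the borderline case. You demand strict convexity ($a_k<\pi$, $b_k<\pi$) and are then forced to confront the $d=4$ configuration where $i$ lies on the intersection of the quadrilateral's diagonals (so every $a_k=\pi$); the paper instead proves the statement for \emph{weak} convexity ($\leq\pi$), under which the bound ``at most two $a_k>\pi$'' holds uniformly for all $d>3$, and then observes that the resulting degenerate flip creates a zero-area triangle that vanishes the moment $i$ is deleted. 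That is cleaner than either of your proposed fixes: a perturbation argument would need care to preserve both the simplicial structure and the cone-angle hypothesis simultaneously, and intrinsic mollification is a preprocessing step on the input edge lengths that would in general destroy the exact $2\pi$ cone angle at $i$, so it cannot be invoked to rescue the hypothesis here.
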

\begin{proof}
  Recall that an edge can be flipped if both endpoints will have degree at least $1$ after the flip, and the edge is contained in a convex quadrilateral (\secref{EdgeFlip}).
  As always, the convex quadrilateral is defined in the sense of the intrinsic geometry determined by edge lengths.
  The endpoint degree constraint is automatically satisfied on a simplicial complex, so we only need to show that the geometric convexity constraint is satisfied, which is equivalent to showing that all angles of the edge's quadrilateral are at most $\pi$.

  Denote the neighboring vertices of $i$ as $j_k$, with $j_{k+1}$ \etc{} implicitly indexed modulo the vertex degree $d$ (\figref{VertexRemovalAngles}).
  The outer angles $\angletriplet{i}{j_{k-1}}{j_k}$ and $\angletriplet{i}{j_{k+1}}{k}$ are corners of Euclidean triangles, and thus are necessarily at most $\pi$, so we need to find an edge $i j_k$ for which the angles $\angletriplet{j_{k-1}}{i}{j_{k+1}}$ and $\angletriplet{j_{k+1}}{j_k}{j_{k-1}}$ are also at most $\pi$.

  First we consider the inner corners $\angletriplet{j_{k-1}}{i}{j_{k+1}}$. At most two of these angles can be greater than $\pi$.
  To see why, suppose there were three $\angletriplet{j_{k-1}}{i}{j_{k+1}} > \pi$. Since the degree of $i$ is $d>3$, then some pair of those three large angles would correspond to disjoint angular sectors around the vertex, and summing their angles yields a value greater than $2\pi$, which is impossible because the angle sum of $i$ is $2 \pi$. Thus all but at most two of the edges incident on $i$ have inner corners with angle at most $\pi$.

  Likewise, at least three of the outer corners \(\angletriplet{j_{k+1}}{j_k}{j_{k-1}}\) are at most \(\pi\). This is because the sum of all \(d\) outer corners must be $(d-2) \pi$. Since they are nonnegative, at most $d-3$ of them can be strictly greater than $\pi$, implying that at least $3$ will be $\pi$.

  Thus at least three outer corners are at most $\pi$, and at most two of the inner corners are \emph{not} at most $\pi$, so there must be at least one edge for which both the inner and outer corners are at most $\pi$.
  This edge can then be flipped, reducing the vertex degree.
\end{proof}

Importantly, this proof \emph{does not} handle the full general case of a $\Delta$-complex, where there may exist self-edges which cause flips to not make progress.
However, we note that \citet[Appendix A]{Sharp:2020:YCF} proves that a similar flip-removal strategy works in the case of a $\Delta$-complex, and we conjecture that an analogous technique could be applied to generalize \thmref{VertexRemovalFlipping}.
Also, note that the ``equality'' case of \thmref{VertexRemovalFlipping} is a possibility, such as a degree four cross configuration where all angles $ = \pi / 2$. 
Fortunately the resulting skinny triangle after the edge is a non-issue, because the center vertex is about to be removed.

\end{document}